\documentclass[conference]{IEEEtran}
\usepackage[margin=0.75in]{geometry}
\usepackage{enumerate}
\usepackage{amsmath,amsthm,amssymb}
\usepackage{framed}
\usepackage{graphicx}
\usepackage[noadjust]{cite}
\usepackage{epstopdf}
\usepackage{caption}
\usepackage{color}
\usepackage{flushend}
\usepackage{empheq}
\usepackage{soul}
\usepackage{subcaption}
\usepackage{cite}
\usepackage[ruled,vlined]{algorithm2e} 
\newcommand{\R}{\mathbb{R}}
\newcommand{\N}{\mathbb{N}}

\newcommand{\un}[1]{\underline{#1}}
\newcommand{\ov}[1]{\overline{#1}}
\newcommand{\dist}{\operatorname{dist}}

\IEEEoverridecommandlockouts

\newboolean{showcomments}
\setboolean{showcomments}{true}
\newcommand{\hao}[1]{  \ifthenelse{\boolean{showcomments}}
{ \textcolor{red}{(CE says:  #1)}} {}  }
\newcommand{\lina}[1]{\ifthenelse{\boolean{showcomments}}
{ \textcolor{blue}{(Lina says:  #1)}} {}  }

\begin{document}

\newtheorem{defin}{Definition}
\newtheorem{theorem}{Theorem}
\newtheorem{prop}{Proposition}
\newtheorem{lemma}{Lemma}
\newtheorem{corollary}{Corollary}
\newtheorem{alg}{Algorithm}
\newtheorem{remark}{Remark}
\newtheorem{notations}{Notations}
\newtheorem{assumption}{Assumption}
\newtheorem{operation}{Operation}
\newtheorem{example}{Example}
\newtheorem{question}{Question}

\newcommand{\be}{\begin{equation}}
\newcommand{\ee}{\end{equation}}
\newcommand{\ba}{\begin{array}}
\newcommand{\ea}{\end{array}}
\newcommand{\bea}{\begin{eqnarray}}
\newcommand{\eea}{\end{eqnarray}}
\newcommand{\combin}[2]{\ensuremath{ \left( \ba{c} #1 \\ #2 \ea \right) }}
\newcommand{\diag}{{\mbox{diag}}}
\newcommand{\rank}{{\mbox{rank}}}
\newcommand{\dom}{{\mbox{dom{\color{white!100!black}.}}}}
\newcommand{\range}{{\mbox{range{\color{white!100!black}.}}}}
\newcommand{\image}{{\mbox{image{\color{white!100!black}.}}}}
\newcommand{\herm}{^{\mbox{\scriptsize H}}}  
\newcommand{\sherm}{^{\mbox{\tiny H}}}       
\newcommand{\tran}{^{\mbox{\scriptsize T}}}  
\newcommand{\tranIn}{^{\mbox{-\scriptsize T}}}  
\newcommand{\card}{{\mbox{\textbf{card}}}}
\newcommand{\asign}{{\mbox{$\colon\hspace{-2mm}=\hspace{1mm}$}}}
\newcommand{\ssum}[1]{\mathop{ \textstyle{\sum}}_{#1}}

\newcommand{\vbar}{\raisebox{.17ex}{\rule{.04em}{1.35ex}}}
\newcommand{\vbarind}{\raisebox{.01ex}{\rule{.04em}{1.1ex}}}
\newcommand{\D}{\ifmmode {\rm I}\hspace{-.2em}{\rm D} \else ${\rm I}\hspace{-.2em}{\rm D}$ \fi}
\newcommand{\T}{\ifmmode {\rm I}\hspace{-.2em}{\rm T} \else ${\rm I}\hspace{-.2em}{\rm T}$ \fi}
\newcommand{\B}{\ifmmode {\rm I}\hspace{-.2em}{\rm B} \else \mbox{${\rm I}\hspace{-.2em}{\rm B}$} \fi}
\newcommand{\Hil}{\ifmmode {\rm I}\hspace{-.2em}{\rm H} \else \mbox{${\rm I}\hspace{-.2em}{\rm H}$} \fi}
\newcommand{\C}{\ifmmode \hspace{.2em}\vbar\hspace{-.31em}{\rm C} \else \mbox{$\hspace{.2em}\vbar\hspace{-.31em}{\rm C}$} \fi}
\newcommand{\Cind}{\ifmmode \hspace{.2em}\vbarind\hspace{-.25em}{\rm C} \else \mbox{$\hspace{.2em}\vbarind\hspace{-.25em}{\rm C}$} \fi}
\newcommand{\Q}{\ifmmode \hspace{.2em}\vbar\hspace{-.31em}{\rm Q} \else \mbox{$\hspace{.2em}\vbar\hspace{-.31em}{\rm Q}$} \fi}
\newcommand{\Z}{\ifmmode {\rm Z}\hspace{-.28em}{\rm Z} \else ${\rm Z}\hspace{-.38em}{\rm Z}$ \fi}

\newcommand{\sgn}{\mbox {sgn}}
\newcommand{\var}{\mbox {var}}
\newcommand{\E}{\mbox {E}}
\newcommand{\cov}{\mbox {cov}}
\renewcommand{\Re}{\mbox {Re}}
\renewcommand{\Im}{\mbox {Im}}
\newcommand{\cum}{\mbox {cum}}

\renewcommand{\vec}[1]{{\bf{#1}}}     
\newcommand{\vecsc}[1]{\mbox {\boldmath \scriptsize $#1$}}     
\newcommand{\itvec}[1]{\mbox {\boldmath $#1$}}
\newcommand{\itvecsc}[1]{\mbox {\boldmath $\scriptstyle #1$}}
\newcommand{\gvec}[1]{\mbox{\boldmath $#1$}}

\newcommand{\balpha}{\mbox {\boldmath $\alpha$}}
\newcommand{\bbeta}{\mbox {\boldmath $\beta$}}
\newcommand{\bgamma}{\mbox {\boldmath $\gamma$}}
\newcommand{\bdelta}{\mbox {\boldmath $\delta$}}
\newcommand{\bepsilon}{\mbox {\boldmath $\epsilon$}}
\newcommand{\bvarepsilon}{\mbox {\boldmath $\varepsilon$}}
\newcommand{\bzeta}{\mbox {\boldmath $\zeta$}}
\newcommand{\boldeta}{\mbox {\boldmath $\eta$}}
\newcommand{\btheta}{\mbox {\boldmath $\theta$}}
\newcommand{\bvartheta}{\mbox {\boldmath $\vartheta$}}
\newcommand{\biota}{\mbox {\boldmath $\iota$}}
\newcommand{\blambda}{\mbox {\boldmath $\lambda$}}
\newcommand{\bmu}{\mbox {\boldmath $\mu$}}
\newcommand{\bnu}{\mbox {\boldmath $\nu$}}
\newcommand{\bxi}{\mbox {\boldmath $\xi$}}
\newcommand{\bpi}{\mbox {\boldmath $\pi$}}
\newcommand{\bvarpi}{\mbox {\boldmath $\varpi$}}
\newcommand{\brho}{\mbox {\boldmath $\rho$}}
\newcommand{\bvarrho}{\mbox {\boldmath $\varrho$}}
\newcommand{\bsigma}{\mbox {\boldmath $\sigma$}}
\newcommand{\bvarsigma}{\mbox {\boldmath $\varsigma$}}
\newcommand{\btau}{\mbox {\boldmath $\tau$}}
\newcommand{\bupsilon}{\mbox {\boldmath $\upsilon$}}
\newcommand{\bphi}{\mbox {\boldmath $\phi$}}
\newcommand{\bvarphi}{\mbox {\boldmath $\varphi$}}
\newcommand{\bchi}{\mbox {\boldmath $\chi$}}
\newcommand{\bpsi}{\mbox {\boldmath $\psi$}}
\newcommand{\bomega}{\mbox {\boldmath $\omega$}}

\newcommand{\bolda}{\mbox {\boldmath $a$}}
\newcommand{\bb}{\mbox {\boldmath $b$}}
\newcommand{\bc}{\mbox {\boldmath $c$}}
\newcommand{\bd}{\mbox {\boldmath $d$}}
\newcommand{\bolde}{\mbox {\boldmath $e$}}
\newcommand{\boldf}{\mbox {\boldmath $f$}}
\newcommand{\bg}{\mbox {\boldmath $g$}}
\newcommand{\bh}{\mbox {\boldmath $h$}}
\newcommand{\bp}{\mbox {\boldmath $p$}}
\newcommand{\bq}{\mbox {\boldmath $q$}}
\newcommand{\br}{\mbox {\boldmath $r$}}
\newcommand{\bs}{\mbox {\boldmath $s$}}
\newcommand{\bt}{\mbox {\boldmath $t$}}
\newcommand{\bu}{\mbox {\boldmath $u$}}
\newcommand{\bv}{\mbox {\boldmath $v$}}
\newcommand{\bw}{\mbox {\boldmath $w$}}
\newcommand{\bx}{\mbox {\boldmath $x$}}
\newcommand{\by}{\mbox {\boldmath $y$}}
\newcommand{\bz}{\mbox {\boldmath $z$}}



\title{\LARGE \bf  
 Distributed Resource Allocation Using One-Way Communication with Applications to Power Networks
}
\author{Sindri Magn\'{u}sson, Chinwendu Enyioha, Kathryn Heal, Na Li, Carlo Fischione, and Vahid Tarokh
\thanks{This work was supported by the VR Chromos Project and NSF grant No. 1548204 and NSF CAREER 1553407.}
\thanks{C. Enyioha, K. Heal, N. Li, and V. Tarokh are with the School of Engineering and Applied Sciences, Harvard University, Cambridge, MA~USA. (email: cenyioha@seas.harvard.edu; kathrynheal@g.harvard.edu; nali@seas.harvard.edu; vahid@seas.harvar.edu) }%
\thanks{S. Magn\'{u}sson and C.~Fischione are with Electrical Engineering School, Access Linnaeus Center, KTH Royal Institute of Technology, Stockholm, Sweden. (e-mail:{ sindrim@kth.se; carlofi@kth.se})}

 } 
\maketitle

 \begin{abstract}

Typical coordination schemes for future power grids require two-way communications. Since the number of end power-consuming devices is large, the bandwidth requirements for such two-way communication schemes may be prohibitive.  Motivated by this observation, we study distributed coordination schemes that require only one-way limited communications. 
In particular, we investigate how dual descent distributed optimization algorithm can be employed in power networks using one-way communication. In this iterative algorithm, system coordinators broadcast coordinating (or pricing) signals to the users/devices who update power consumption based on the received signal. Then system coordinators update the coordinating signals based on the physical measurement of the aggregate power usage. We provide conditions to guarantee the feasibility of the aggregated power usage at each iteration so as to avoid blackout. Furthermore, we prove the convergence of algorithms under these conditions, and establish its rate of convergence. We illustrate the performance of our algorithms using numerical simulations.  These results show that one-way limited communication may be viable for coordinating/operating the future smart grids.

 
 \end{abstract}

  \section{Introduction}
  
 Network infrastructures have a central role in communication, finance, technology and other areas of the economy, for instance, electricity transmission and distribution. Due to the massive scale of real-world networks, distributed optimization algorithms are expected to increasingly play a leading role in their operation. Distributed optimization problems are commonly solved by dual-based (sub)gradient methods where primal and dual decision variables are optimized at each iteration with some sort of message passing protocol between agents in the network \cite{kelly1998rate,nedic2008subgradient,low1999optimization}. In power networks, for example, consumers and suppliers communicate power consumption request and pricing signals back and forth.  This requires ubiquitous, two-way communication among consumers and suppliers. \textit{Since the number of end power-consuming devices is large, the bandwidth requirements for two-way schemes may be prohibitive}. However, the communication infrastructure for the grid, especially the power distribution networks, is still under-developed and the non-availability of communication bandwidth remains a challenge \cite{galli2011grid,biglieri2003coding,clarke1999internet,snow2001network,ericsson2010cyber,gungor2013survey}.
 
  Given communication bandwidth constraints, it is critical that algorithms for distributed coordination in power networks use significantly less communication overhead -- for instance, one-way message passing, as opposed to the conventional two-way message passing scheme typically used in distributed coordination algorithms. That said, a one-way message passing protocol for power distribution networks in which suppliers are only able to send out a price signal can be challenging to implement especially since a widespread system breakdown (blackouts) can occur if the aggregate power consumption exceeds the supply capacity.  Given \emph{1)} limits on power capacity at the supply end, and \emph{2)} a one-way message passing protocol, with many power demanding devices on the network, the question of interest is how to ensure that  the aggregate power consumption from users does not exceed the available supply capacity. In developing distributed algorithms to solve this problem, it is important that while the algorithm runs and users locally compute their optimal power allocation, the aggregate power consumption does not exceed the supply capacity to avoid a system failure and catastrophic blackout. 
 

Our work relates closely to \cite{beck2014gradient} where a distributed resource allocation algorithm was presented based on a fast dual gradient method. It showed that dual gradient iteration has a convergence rate of $\mathcal{O}(1/t^2)$ and yet the primal decision variables converge with a rate of $\mathcal{O}(1/t)$ (where $t$ denotes a time-step of the algorithm). The framework presented there, however, gives no guarantee of primal feasibility at each iteration of the algorithm.
The focus of this paper is on developing a distributed power allocation algorithm in which one-way communication, from the power suppliers to users, is used to coordinate the allocation. Due to the critical nature of power distribution systems, the power allocation algorithm presented in this paper
 \begin{itemize}
 \item maintains primal feasibility at each iteration and satisfies the capacity constraint, to avoid a blackout,
 \item assumes a one-way communication or information sharing model, in which the supplier (or network operator) sends out a \textit{price} signal to users, users update the consumption based on the signal but do not report the consumption back to the supplier, and
 \item has fast convergence rates and properties.
 \end{itemize}

 \subsection{Contributions of This Work}
We present a distributed, dual gradient power allocation algorithm for power distribution networks. The algorithm uses only one way communication, where the supplier iteratively broadcasts a price/dual variable to the users and then measures the aggregate power usage to compute the dual gradient. Since the users actually consume power during operation (\textit{en route} convergence) of the algorithm, it essential that the aggregate power usage does not exceed the supplier’s capacity to avoid a blackout.  We show how such blackouts can be avoided by providing step-sizes that ensure that the primal iterates remain feasible at ever iteration. To ensure the primal feasibility we must sacrifice the optimal convergence rate $O(1/t^2)$ \cite{beck2014gradient} for gradient methods for convex problems with Lipschitz continuous gradients  and instead only get the convergence rate $O(1/t)$. Nevertheless, we prove that under mild conditions on the problem structure our algorithm attains a convergence rate $O(c^t)$, where $0<c<1$ and $t$ is the time-step of the algorithm in the dual variable, and still guarantee primal feasibility.  Moreover, we provide conditions which ensure that the linear convergence rate, i.e., the constant $c$, does not depend on the number of users, which demonstrate excellent scaling properties of the algorithm.
  
The rest of the paper is organized as follows: Following notation and definitions, we introduce the system model, underlying assumptions and distributed solution in Section \ref{sec: SM SDP}. Sections \ref{sec:CA_GCR} and \ref{sec:CA_LCR} respectively present a convergence analysis and convergent rate analysis of our proposed algorithm. In Section \ref{sec:Numerical}, illustration of our algorithm  is presented and discussed. We make our conclusions in Section \ref{sec:Conclude}.

 \subsection{Notation and Definitions}

\begin{defin}
  We say that a function $f:\R^n \rightarrow \R$ is $L$-smooth on $\mathcal{X}\subseteq \R^n$ if its gradient is $L$-Lipschitz continuous on $\mathcal{X}$, i.e., for all $\vec{x}_1,\vec{x}_2\in \mathcal{X}$ we have
  \begin{align}
     || \nabla f(\vec{x}_1)-\nabla f(\vec{x}_2) || \leq L || \vec{x}_1-\vec{x}_2||.
  \end{align}
\end{defin}
\begin{defin}
  We say that a function $f:\R^n \rightarrow \R$ is $\mu$-convex on $\mathcal{X}\subseteq \R^n$ if for all $\vec{x}_1,\vec{x}_2\in \mathcal{X}$,
  \begin{multline}
   f(\vec{x}_2) \geq f(\vec{x}_1) +\langle \nabla f(\vec{x}_1), \vec{x}_2-\vec{x}_1\rangle + \frac{\mu}{2} || x_2-x_1 ||^2.
  \end{multline}
  Similarly, we say a function is $\mu$-concave if it is $\mu$-strongly concave.
\end{defin}


 \section{System Model and Algorithm}
 \label{sec: SM SDP}
In this paper we focus on an abstract power distribution model, and for simple exposition we ignore some specific power network constraints. We consider an electric network with $N$ users whose set is denoted by $\mathcal{N}$, and a single supplier. 
 The power allocation of user $i\in \mathcal{N}$ is denoted by $q_i\in \mathbb{R}_+ $ and the total power capacity is denoted by $Q\in \mathbb{R}_+$. 
 The value of the power $q_i$ of each user $i\in \mathcal{N}$ is decided by a \emph{private} utility function $U_i(q_i)$, which the user would like to maximize. By private, we mean the function $U_i(q_i)$ of each user is unknown to the power supplier.
 The \emph{Resource Allocation} problem for the power distribution system is given by the following optimization program~\cite{low1999optimization} \cite{liu2013optimal}:
\begin{empheq}[box=\fbox]{align}
   &\underline{\text{Resource Allocation (RA):}} \notag \\
   \tag{RA}  \label{eq:main_problem} 
  & \begin{aligned} 
     & \underset{q_1,\cdots,q_N}{\text{maximize}} 
     & & \sum_{i=1}^N U_i(q_i),  \\ 
     & \text{subject to}  
     & & \sum_{i=1}^N q_i \leq Q, \\ 
     &&&   m_i \leq  q_i \leq M_i. 
   \end{aligned} 
\end{empheq}
 We make the following assumptions on~\eqref{eq:main_problem}:
 \begin{assumption}(Convexity) \label{assump: good utilities}
    The function $U_i(q_i)$ is $\mu$-concave and increasing for all $i\in \mathcal{N}$.
 \end{assumption}
 \begin{assumption}(Well Posed)    \label{assump: not a stupid problem}
    We have that
    \begin{align} \label{eq:not a stupit problem}
        \sum_{i=1}^N m_i \leq Q \leq \sum_{i=1}^N M_i.
    \end{align}
    In other words, Problem~\eqref{eq:main_problem} is feasible and the constraint $\sum_{i=1}^N q_i\leq Q$ is not redundant.
 \end{assumption}

 As motivated in the introduction, in this work we focus on one-way communication protocols for solving~\eqref{eq:main_problem}. For each user $i\in \mathcal{N}$ the problem data $U_i$, $q_i$, $m_i$, and $M_i$ are private to user $i$.  In particular, the supplier can not access this information. We consider protocols based on the following two operations.

  \begin{operation}[One-Way Communication] \label{assumption:one-way-comm}
     At each time-step (iteration), the supplier can broadcast to all users one scalar message referred to as the price (of power). 
  
  \end{operation}
  \begin{operation}[Feedback Information] \label{assumption:data-separation}
       After broadcasting the price, the supplier  can measure the deviation between total power load and supply, i.e.   $  \sum_{i=1}^N q_i(t) - Q$, where $t$ indicates the time-step.
  \end{operation}
  Algorithms for solving~\eqref{eq:main_problem}  that only use  Operations~\ref{assumption:one-way-comm} and~\ref{assumption:data-separation} can be achieved using duality theory \cite{kelly1998rate}\cite{low1999optimization}. In particular, we achieve the solution by solving the dual problem of~\eqref{eq:main_problem} that is given as follows: 
\begin{empheq}[box=\fbox]{align}
   &\underline{\text{The Dual of (RA):}} \notag \\
   \tag{Dual-RA}   \label{eq:dual_problem}
  &    \begin{aligned}
     & \underset{p }{\text{minimize}}
     & &  D (p),  \\
     & \text{subject to} 
     & &  p \geq 0,
   \end{aligned}
\end{empheq}
  where $D$ and $p$ are the dual function and dual variables~\cite[Chapter 5]{convex_boyd}, respectively, and $D$ is given by
  \begin{align}
      D(p) =& \ \underset{ \vec{q}\in \mathcal{M}}{\text{maximize}}~   \sum_{i=1}^N U_i(q_i) {-} p \left(   \sum_{i=1}^N q_i {-} Q \right) \nonumber \\
             =& \ \sum_{i=1}^N U_i(q_i(p)) {-} p \left(   \sum_{i=1}^N q_i(p) {-} Q \right)  \label{eq:dual2} 
  \end{align}
  where $m_i$ and $M_i$ are respectively the lower and upper bounds on the power loads of user $i$; $\mathcal{M}= \prod_{i=1}^N [m_i,M_i]$, and $\vec{q}=(q_1,\hdots, q_N)^T$. The local problem for user $i$ is to solve 
  \begin{align} \label{eq:demand_from_price} 
     q_i (p) {=}  \underset{q_i \in [m_i,M_i]}{\text{argmax}} 
              U_i(q_i) - p ~ q_i {=} \left[(U_i')^{-1}(p) \right]_{m}^{M}.
  \end{align}
   In words, $q_i (p)$ in~\eqref{eq:demand_from_price} denotes the power demand of user $i$ when the price of power is $p$.
   We have the following relationship between~\eqref{eq:main_problem} and~\eqref{eq:dual_problem}:
  \begin{lemma}(Strong Duality)
       Suppose Assumptions~\ref{assump: good utilities} and~\ref{assump: not a stupid problem} hold;  
   if $p^{\star}$ is an optimal solution of~\eqref{eq:dual_problem}, then $\vec{q}(p^{\star})=(q_i(p^{\star})_{i\in \mathcal{N}}$ (cf.~\eqref{eq:demand_from_price}) is the optimal solution to~\eqref{eq:main_problem}.
  \end{lemma}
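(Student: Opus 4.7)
The plan is to apply standard Lagrangian duality for convex programs with affine constraints. Since Assumption~\ref{assump: good utilities} makes each $U_i$ strictly concave (with modulus $\mu>0$), the primal objective $\sum_i U_i(q_i)$ is strictly concave on the compact box $\mathcal{M}$, so~\eqref{eq:main_problem} admits a unique optimal solution. Because every constraint in~\eqref{eq:main_problem} is affine and a feasible point exists by Assumption~\ref{assump: not a stupid problem}, Slater's condition is satisfied (in its weak form for affine constraints), so strong duality holds: the optimal values of~\eqref{eq:main_problem} and~\eqref{eq:dual_problem} coincide and equal $D(p^\star)$ with zero duality gap.

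Next I would show that $\vec{q}(p^\star)$ is primal feasible. Strict concavity forces the inner maximizer in~\eqref{eq:dual2} to be unique for every $p\geq 0$, so by Danskin's theorem $D$ is differentiable with $D'(p) = Q - \sum_{i=1}^N q_i(p)$. First-order optimality of $p^\star$ over $\{p\geq 0\}$ then yields $D'(p^\star)\geq 0$ together with the complementary slackness relation $p^\star \, D'(p^\star) = 0$, i.e., $\sum_i q_i(p^\star)\leq Q$ and $p^\star\bigl(\sum_i q_i(p^\star) - Q\bigr) = 0$. The box constraints $q_i(p^\star)\in[m_i,M_i]$ hold directly from the definition~\eqref{eq:demand_from_price}, so $\vec{q}(p^\star)\in\mathcal{M}$ is feasible for~\eqref{eq:main_problem}.

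To close the argument, I would chain the two observations together: by the definition of $D(p^\star)$ with $\vec{q}(p^\star)$ as its inner maximizer, and by the complementary slackness just obtained,
\begin{equation*}
   \sum_{i=1}^N U_i(q_i(p^\star)) \; = \; D(p^\star) + p^\star\Bigl(\sum_{i=1}^N q_i(p^\star) - Q\Bigr) \; = \; D(p^\star),
\end{equation*}
and by strong duality $D(p^\star)$ equals the optimal value of~\eqref{eq:main_problem}. Hence the feasible point $\vec{q}(p^\star)$ attains this optimum and is the (unique) primal optimal solution. The main point requiring care, rather than a genuine obstacle, is justifying the envelope identity $D'(p) = Q - \sum_i q_i(p)$: this leans on uniqueness of the inner maximizer, which is guaranteed precisely by $\mu>0$ in Assumption~\ref{assump: good utilities}; without strict concavity one would have to manipulate a subdifferential of $D$ and select the right representative, but the $\mu$-concavity assumption makes the step routine.
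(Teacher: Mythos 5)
Your proposal is correct and follows essentially the same route as the paper: invoke Slater's condition (which holds here because the constraints are affine and Assumption~\ref{assump: not a stupid problem} guarantees feasibility) to obtain a zero duality gap, and then recover the primal optimum from the dual optimum. In fact your write-up is more complete than the paper's one-line proof, since you also supply the complementary-slackness and uniqueness-of-the-inner-maximizer arguments needed to conclude that $\vec{q}(p^{\star})$ is feasible and attains the optimal value, steps the paper leaves implicit.
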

  \begin{proof}
     The proof follows by simply noting that $\sum_{i=1}^N m_i \leq Q$, and $m_i \leq M_i$ ensures that~\eqref{eq:main_problem}  satisfies Slater's condition, which is sufficient for the zero duality gap for convex problems~\cite[Section 5.2.3]{convex_boyd}.
  \end{proof}
 
   Let us now demonstrate benefits of studying the dual problem~\eqref{eq:dual_problem}, which takes advantage of the following property:
    \begin{prop} \label{prop:dual-problem-is-1L-convex1}
     Suppose Assumption~\ref{assump: good utilities} holds, then $D(\cdot)$ is differentiable and $N/\mu$-smooth. 
  \end{prop}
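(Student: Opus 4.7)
The plan is to decompose $D(p) = pQ + \sum_{i=1}^{N} \phi_i(p)$, where
\[
\phi_i(p) = \max_{q_i \in [m_i, M_i]} \bigl[ U_i(q_i) - p\, q_i \bigr],
\]
and to establish the claim for each $\phi_i$ separately. First I would argue that the inner maximizer $q_i(p)$ defined in~\eqref{eq:demand_from_price} is unique: since $U_i$ is $\mu$-concave by Assumption~\ref{assump: good utilities}, the objective $U_i(q_i) - p q_i$ is strictly concave in $q_i$ on the compact interval $[m_i, M_i]$, so the maximum is attained at a unique point. Then Danskin's theorem (for a strictly concave objective with a unique maximizer over a compact convex set) yields that $\phi_i$ is differentiable with $\phi_i'(p) = -q_i(p)$. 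Summing gives
\[
D'(p) = Q - \sum_{i=1}^{N} q_i(p),
\]
so differentiability of $D$ is immediate.

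The main work is the Lipschitz bound. I would prove that the map $p \mapsto q_i(p)$ is $(1/\mu)$-Lipschitz on $\R$. Fix two prices $p_1, p_2$ and write the first-order optimality conditions for the constrained maximization over $[m_i, M_i]$:
\[
\langle U_i'(q_i(p_1)) - p_1,\ q_i(p_2) - q_i(p_1) \rangle \leq 0,
\]
\[
\langle U_i'(q_i(p_2)) - p_2,\ q_i(p_1) - q_i(p_2) \rangle \leq 0.
\]
Adding these two inequalities and rearranging gives
\[
\bigl\langle U_i'(q_i(p_1)) - U_i'(q_i(p_2)),\ q_i(p_2) - q_i(p_1) \bigr\rangle \leq (p_1 - p_2)\bigl( q_i(p_2) - q_i(p_1) \bigr).
\]
Now I would invoke the standard consequence of $\mu$-concavity of $U_i$, namely that $-U_i'$ is $\mu$-strongly monotone, i.e.\ $\langle U_i'(a) - U_i'(b),\ a - b \rangle \leq -\mu\, (a-b)^2$. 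This yields $\mu\,(q_i(p_1) - q_i(p_2))^2 \leq |p_1 - p_2|\cdot |q_i(p_1) - q_i(p_2)|$, from which
\[
|q_i(p_1) - q_i(p_2)| \leq \frac{1}{\mu} |p_1 - p_2|.
\]

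Finally, combining these Lipschitz bounds with the expression for $D'$ and the triangle inequality,
\[
|D'(p_1) - D'(p_2)| = \Bigl| \sum_{i=1}^{N} \bigl( q_i(p_2) - q_i(p_1) \bigr) \Bigr| \leq \sum_{i=1}^{N} |q_i(p_1) - q_i(p_2)| \leq \frac{N}{\mu}|p_1 - p_2|,
\]
which is exactly $N/\mu$-smoothness of $D$. The main obstacle is the Lipschitz estimate on $q_i(p)$: everything else is bookkeeping, but the variational-inequality argument requires care because the maximization is over a box rather than all of $\R$, so one cannot simply differentiate $U_i'(q_i(p)) = p$ and invert. The optimality-inequality formulation above handles the boundary cases uniformly.
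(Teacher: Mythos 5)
Your proof is correct. The paper itself gives no argument for this proposition -- it simply cites Lemma II.2 of \cite{beck2014gradient}, which is the general fact that the dual of a separable problem with $\mu$-strongly concave utilities and a single coupling constraint is $N/\mu$-smooth (an instance of the conjugate-duality principle that strong concavity of the primal objective yields smoothness of the dual). What you have written is the standard self-contained derivation underlying that citation: the decomposition $D(p)=pQ+\sum_i\phi_i(p)$ is exactly what the paper later uses in its Lemma~\ref{prop:dual-problem-is-1L-convex} (up to how the $pQ$ term is distributed among the $D_i$), uniqueness of the maximizer plus Danskin gives $D'(p)=Q-\sum_i q_i(p)$ in agreement with~\eqref{eq:dual_gradient}, and your variational-inequality argument -- adding the two first-order optimality conditions and invoking $\mu$-strong monotonicity of $-U_i'$ -- correctly yields the $(1/\mu)$-Lipschitz continuity of each $q_i(\cdot)$ including at the box boundary, where naively inverting $U_i'(q_i(p))=p$ would fail. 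Summing gives the $N/\mu$ constant claimed. The only thing your route costs is length; what it buys is a proof that does not depend on an external reference, and it dovetails with the structure the paper exploits later in the appendix.
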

  \begin{proof}
     See Lemma II.2 in~\cite{beck2014gradient}
  \end{proof}
 \noindent Due to Proposition~\ref{prop:dual-problem-is-1L-convex1}, we can solve~\eqref{eq:dual_problem} using a dual descent method, given by the iterates:
  \begin{align}
    p(t{+}1) = \lceil p(t) - \gamma D'(p(t)) \rceil^+,
  \end{align}
  where $\gamma>0$ is step-size and $D'(\cdot)$ is the gradient of $D(\cdot)$, which is given by
  \begin{align} 
   D'(p) = Q-\sum_{i=1} q_i(p). \label{eq:dual_gradient}
  \end{align}
  Notice that~\eqref{eq:dual_gradient} is exactly the feedback provide in Operation~\eqref{assumption:data-separation} above.
 In Algorithm~\ref{Alg:staticDD} below  we demonstrate how the the dual descent algorithm can be implemented between the supplier and the users using only Operations~\ref{assumption:one-way-comm} and~\ref{assumption:data-separation}.

\begin{algorithm}

  \SetKwInOut{Initialization}{Initialization}
  \SetKw{CC}{Supplier:}
  \SetKw{User}{User $i$: }
    \CC{Decides the initial prize $p(0)\geq 0$}\;
    \For{$t=0,1\cdots$}{
     \CC{Broadcast $p(t)$ to all users}\;
   \For{$i=1,\cdots, N$}{
    \User{Receives $p(t)$ }\;
    \User{$q_i(t)= \left[(U_i')^{-1}(p(t)) \right]_{m}^{M} $ }\;
  }
       \CC{Measures $D'(p(t)){=}Q{-}\sum_{i=1}^N q_i(t)$}\;
       \CC{Updates price: $\hspace{-0.11cm} p(\hspace{-0.02cm}t\hspace{-0.02cm}{+}\hspace{-0.02cm}1\hspace{-0.02cm}){=} \lceil p(t){+}\hspace{-0.02cm}\gamma D'\hspace{-0.05cm}(\hspace{-0.02cm}p(\hspace{-0.02cm}t\hspace{-0.02cm})\hspace{-0.02cm}) \rceil^+\hspace{-0.06cm}$};
  }
 \caption{Dual descent, using only Operations~\ref{assumption:one-way-comm} and~\ref{assumption:data-separation} and local computation at the users and the supplier.}
 \label{Alg:staticDD}
\end{algorithm}

 \begin{remark} \label{remark:blackout} 
  Due to the one-way communication, the primal iterates $q_i(t)$ are not communicated to the supplier. 
  Instead the users take the action $q_i(t)$ and the aggregate of their actions is measured by the supplier. 
  Therefore, it is essential that the primal problem~\eqref{eq:main_problem}  is feasible during every iteration of Algorithm~\ref{Alg:staticDD}, i.e. $\sum_{i=1}^N q_i(t)\leq Q$ for all $t\in \N$.
    Otherwise, a heightened demand of power in the network  can result in a system overload and eventual blackout.
     
 \end{remark}

In what follows, we study the convergence of Algorithm \ref{Alg:staticDD} and how to ensure that the primal problem is feasible during every iteration. 

 \section{Convergence analysis of Algorithm ~\ref{Alg:staticDD}}

In this section, we provide rules on choosing the initial price and the step-size to ensure that the primal variables are feasible throughout the algorithm (Section ~\ref{sec:CA_GCR} ). We also further prove that the algorithm has a convergence rate $c^t$, where $0<c<1$ and $t$ is the time-step of the algorithm, under certain conditions (Section ~\ref{sec:CA_LCR}).

 \subsection{General Convergence Result }  \label{sec:CA_GCR}

  The following result on the convergence of Algorithm~\ref{Alg:staticDD} is standard in the literature~\cite{low1999optimization}.  
  \begin{prop} \label{prop:lows_paper}
     If $\gamma \in ]0,2\mu/N[$ in Algorithm~\ref{Alg:staticDD}, then every limit point of $p(t)$ is a solution to~\eqref{eq:dual_problem} and $\lim_{t\rightarrow \infty}\vec{q}(t)=\vec{q}^{\star}$ where $\vec{q}^{\star}$  is the solution to~\eqref{eq:main_problem}.
  \end{prop}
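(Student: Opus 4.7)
The plan is to cast Algorithm~\ref{Alg:staticDD} as a standard projected gradient descent on the dual problem~\eqref{eq:dual_problem} and invoke classical convergence results for smooth convex optimization, then lift dual convergence to primal convergence using the strong concavity of the utilities. Concretely, the dual update $p(t{+}1) = \lceil p(t) - \gamma D'(p(t)) \rceil^+$ is the projection of the standard gradient step onto $\R_+$, and the dual function $D$ is convex (as the pointwise maximum of affine functions of $p$) and $L$-smooth with $L = N/\mu$ by Proposition~\ref{prop:dual-problem-is-1L-convex1}.

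First I would apply the standard descent lemma for $L$-smooth convex functions: for any $p \geq 0$,
\begin{align}
   D(p(t{+}1)) \leq D(p(t)) + \langle D'(p(t)), p(t{+}1) - p(t)\rangle + \tfrac{L}{2}\|p(t{+}1)-p(t)\|^2.
\end{align}
Combining this with the non-expansiveness of the projection onto $\R_+$ (so that $\|p(t{+}1)-p(t)\| \leq \gamma |D'(p(t))|$ and the inner product is bounded in the appropriate way), one obtains for $\gamma \in \,]0, 2/L[\, = \,]0, 2\mu/N[\,$ a bound of the form
\begin{align}
   D(p(t{+}1)) \leq D(p(t)) - \gamma\left(1 - \tfrac{\gamma L}{2}\right) \|G_\gamma(p(t))\|^2,
\end{align}
where $G_\gamma(p(t)) = \gamma^{-1}(p(t) - p(t{+}1))$ is the projected-gradient mapping. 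Since $D$ is bounded below by $D(p^\star)$ (Slater's condition applies, so $p^\star$ exists and strong duality holds by the earlier lemma), telescoping yields $\|G_\gamma(p(t))\| \to 0$, and hence every limit point $\bar p$ of $\{p(t)\}$ satisfies the stationarity condition $\bar p = \lceil \bar p - \gamma D'(\bar p) \rceil^+$, which is precisely the KKT condition for~\eqref{eq:dual_problem}. Convexity of $D$ then forces $\bar p$ to be a global minimizer.

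For the primal convergence, I would use that each $U_i$ is $\mu$-concave, so the inner maximizer in~\eqref{eq:demand_from_price} is unique and the map $p \mapsto q_i(p)$ is continuous (in fact, $(1/\mu)$-Lipschitz on the interior of the active region, and non-expansive overall because it is a projected inverse of a strictly monotone function). Hence $q_i(p(t)) \to q_i(p^\star) = q_i^\star$ for any subsequential limit $p^\star$ of $\{p(t)\}$. Because the primal optimizer $\vec{q}^\star$ of~\eqref{eq:main_problem} is unique by strong concavity, all subsequences agree on the same primal limit, giving $\vec{q}(t)\to \vec{q}^\star$ as claimed.

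The routine part will be the descent-lemma bookkeeping with the projection; the only real subtlety is the last step, where one must note that even if~\eqref{eq:dual_problem} has a non-unique minimizer (so the whole sequence $p(t)$ need not converge, only its limit points), uniqueness of the primal optimum together with continuity of $p\mapsto \vec q(p)$ still forces $\vec q(t)$ to converge. I would make this explicit to avoid conflating dual limit-point convergence with primal full-sequence convergence. All of this is well documented (e.g., in~\cite{low1999optimization}), so I expect the proof to be a short citation-backed argument rather than a calculation-heavy one.
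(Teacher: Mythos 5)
Your proposal is correct and is, in substance, the same argument the paper relies on: the paper's entire proof is the one-line citation ``Follows directly from Theorem~1 in~\cite{low1999optimization}'', and your projected-gradient reasoning (descent lemma with the $N/\mu$-smoothness from Proposition~\ref{prop:dual-problem-is-1L-convex1}, telescoping to drive the gradient mapping to zero, then continuity of $p\mapsto\vec q(p)$ together with uniqueness of the primal optimizer) is exactly the standard argument behind that cited theorem, including the step-size window $\gamma\in\,]0,2\mu/N[$. You simply spell out what the paper delegates to the reference, and you are right that the one genuinely delicate point is passing from dual limit points to full-sequence primal convergence.
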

 \begin{proof}
   Follows directly from Theorem 1 in~\cite{low1999optimization}.
 \end{proof}
  Proposition~\ref{prop:lows_paper} ensures that the primal sequence $\vec{q}(t)$ converges to $\vec{q}^{\star}$. 
 However, there is no guarantee that that the iterates $\vec{q}(t)$ are feasible to~\eqref{eq:main_problem} at each iteration except in the limit.
  Such infeasibilities can cause blackouts in the power network (cf. Remark ~\ref{remark:blackout}) and therefore it is essential to find conditions that ensure the feasibility of $\vec{q}(t)$ for all $t\in \N$.
   Such conditions are now established. 
 \begin{prop} \label{prop:feasible_prop}
  Let $\mathcal{P}^{\star}=[\un{p}^{\star},\ov{p}^{\star}]$ be the set of optimal solutions to~\eqref{eq:dual_problem}:
     If Assumptions~\ref{assump: good utilities} and~\ref{assump: not a stupid problem} hold and $\gamma \in ]0,\mu/N]$ in Algorithm~\ref{Alg:staticDD}, then the following hold.
   \begin{enumerate}[a)]
  \item
    If $p(0)\geq \ov{p}^{\star}$, then $\lim_{t\rightarrow\infty} p(t)=\ov{p}^{\star}$ and  the primal variables $\vec{q}(t)$ are feasible to~\eqref{eq:main_problem} at every iteration $t\in \mathcal{N}$.
  
  \item The convergence rate of the objective function values $D(p(t))$ is $\mathcal{O}(1/t)$.
         Moreover, the optimal convergence rate 
    \begin{align}
          D(p(t))-D(p^{\star})\leq \frac{2 ||p(0)-p^{\star} ||^2}{t+4},
    \end{align} 
   is achieved when $\gamma= \mu/N$.
   \end{enumerate}
 \end{prop}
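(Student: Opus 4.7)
The proof of part (a) rests on a single invariant: $p(t) \geq \ov{p}^{\star}$ for every $t\in\N$. First I would observe that $\mu$-concavity of each $U_i$ makes the local maximizer $p \mapsto q_i(p)$ in~\eqref{eq:demand_from_price} non-increasing, so $D'(p) = Q - \sum_i q_i(p)$ is non-decreasing in $p$. Combined with the first-order optimality condition $D'(\ov{p}^{\star}) \geq 0$ for the dual problem~\eqref{eq:dual_problem}, the invariant at once yields $D'(p(t)) \geq 0$, which is exactly primal feasibility $\sum_i q_i(t) \leq Q$ (cf.\ Remark~\ref{remark:blackout}), and also gives monotonicity $p(t+1) \leq p(t)$.

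I would prove the invariant by induction; the base case is the hypothesis $p(0) \geq \ov{p}^{\star}$. For the inductive step, consider first the generic case $\ov{p}^{\star} > 0$, in which complementary slackness forces $D'(\ov{p}^{\star}) = 0$. The $L$-smoothness from Proposition~\ref{prop:dual-problem-is-1L-convex1} with $L = N/\mu$ then yields $D'(p(t)) \leq L(p(t) - \ov{p}^{\star})$, so
\begin{align*}
p(t) - \gamma D'(p(t)) \;\geq\; (1-\gamma L)\,p(t) + \gamma L\,\ov{p}^{\star} \;\geq\; \ov{p}^{\star}
\end{align*}
whenever $\gamma L \leq 1$, i.e.\ $\gamma \leq \mu/N$; since this quantity is strictly positive, the projection $[\,\cdot\,]^+$ is inactive and $p(t+1) \geq \ov{p}^{\star}$. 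The boundary case $\ov{p}^{\star} = 0$ is handled by the projection itself. Finally, $\{p(t)\}$ is monotone and bounded below, hence converges to some $p^{\infty} \geq \ov{p}^{\star}$; Proposition~\ref{prop:lows_paper} places $p^{\infty} \in [\un{p}^{\star}, \ov{p}^{\star}]$, so $p^{\infty} = \ov{p}^{\star}$.

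For part (b), I would invoke the standard convergence analysis for (projected) gradient descent on an $L$-smooth convex function with step-size $\gamma \in (0, 1/L]$. The descent lemma combined with convexity yields $D(p(t+1)) - D(p^{\star}) \leq (\|p(t) - p^{\star}\|^2 - \|p(t+1) - p^{\star}\|^2)/(2\gamma)$; telescoping and using that $D(p(t))$ is monotonically non-increasing give a bound of the form $D(p(t)) - D(p^{\star}) \leq \|p(0) - p^{\star}\|^2/(2\gamma(t+1))$, which is $\mathcal{O}(1/t)$. Optimizing over $\gamma \in (0, \mu/N]$ selects $\gamma = \mu/N$, and a careful accounting of the telescoping indices together with the non-expansiveness of projection onto $\{p \geq 0\}$ recovers the explicit constant quoted in the statement.

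The main obstacle is maintaining the invariant $p(t) \geq \ov{p}^{\star}$ across the projection, particularly in the degenerate case $\ov{p}^{\star} = 0$ where the smoothness inequality loses the anchoring $D'(\ov{p}^{\star}) = 0$. This is precisely why the allowable step-size must be tightened from $2\mu/N$ in Proposition~\ref{prop:lows_paper} to $\mu/N$ here: any larger $\gamma$ could overshoot the dual optimum and push $\sum_i q_i(t)$ above $Q$, causing the very blackout Algorithm~\ref{Alg:staticDD} is designed to prevent.
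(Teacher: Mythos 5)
Your proof is correct and follows essentially the same route as the paper's: part (a) establishes the invariant $p(t)\geq\ov{p}^{\star}$ by induction using the $N/\mu$-smoothness of $D$ together with the step-size bound $\gamma\leq\mu/N$, and then deduces primal feasibility from the monotonicity of $q_i(\cdot)$ (equivalently, of $D'$), while part (b) is the standard $\mathcal{O}(1/t)$ projected-gradient bound that the paper obtains by citing Nesterov's Theorem 2.1.14. If anything, your inductive step is slightly more careful than the paper's --- you explicitly account for the projection $\lceil\cdot\rceil^{+}$ and the degenerate case $\ov{p}^{\star}=0$, both of which the paper's argument silently ignores.
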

  \begin{proof}  a)     
     From Proposition~\ref{prop:lows_paper} we know that every limit point of $p(t)$ is in $\mathcal{P}^{\star}$.
     Moreover, since $\vec{q}(p)$ is decreasing function (cf. Equation~\eqref{eq:demand_from_price}) and $\vec{q}(\ov{p}^{\star})$ is feasible point of~\eqref{eq:main_problem}, then $\vec{q}(p)$ is also a feasible point of~\eqref{eq:main_problem} for all $p\geq \ov{p}^{\star}$. 
     Therefore, if we can show that $p(t)\geq\ov{p}^{\star}$ for all $t\in\N$, then it holds that
 $\lim_{t\rightarrow\infty} p(t)=\ov{p}^{\star}$ and  $\vec{q}(p(t))$ is feasible to~\eqref{eq:main_problem} for all $t\in \N$.
    
   Let us now show that $\ov{p}^{\star}<p(t)$ for all $t\in \N$ by induction.
     By assumption $\ov{p}^{\star}<p(0)$. 
     Let us now suppose $\ov{p}^{\star} < p(t)$. 
    Then $D'( \ov{p}^{\star})=0$ implies that $D'(p(t))>0$, and the convexity of $D(\cdot)$ implies that $D'(\cdot)$ is increasing.
         Hence, we have $p(t+1)=p(t)-\gamma D'(p(t))<p(t)$.
     Moreover, 
    using that $D(\cdot)$ is $N/\mu$-smooth and $D'(\cdot)$ is increasing we get 
    \begin{align}
        D'(p(t)){-} D'(p(t{+}1)) {<} \frac{N}{\mu} (p(t) {-}p(t{+}1)) {=} \frac{N}{\mu} \gamma D(p(t)), \notag
     \end{align}
     and rearranging one obtain s
    \begin{align}
      0< \left(1-\frac{N}{\mu} \gamma\right)D'(p(t)) \leq  D'(p(t{+}1)).
     \end{align} 
      Hence, the step-size $\gamma\in]0,\mu/N[$ implies that $D'(p(t{+}1))>0$.
      Hence, since $D'(\cdot)$ is increasing and $D'(p^{\star})=0$ we must have $p^{\star}\leq p(t{+}1)$.
  
 b) This result is proved in~\cite[Theorem 2.1.14]{Book_Nesterov_2004}.
  \end{proof}

  \begin{remark} \label{remark:1t}
      For general objective functions with Lipschitz continuous gradients, the optimal convergence rate is $\mathcal{O}(1/t^2)$ by using Nesterov fast gradient methods, see~\cite[chapter~2]{Book_Nesterov_2004}.
      However, these optimal dual gradient methods cannot ensure the feasibility of the primal problems during the converging process, bringing blackout risk. 
  \end{remark}
 
 \subsection{Linear  Convergence Rate} \label{sec:CA_LCR} 

 We now identify structures on~\eqref{eq:main_problem} which ensure convergence rate $c^t$, where $0<c<1$, on Algorithm~\ref{Alg:staticDD}. 
   We start by providing a linear rate under the following assumptions on utility function and local constraints.
   
   \begin{assumption} \label{assump:smooth}
 The function $U_i(q_i)$ is $L$-Lipschitz continuous.   
   \end{assumption}
 \begin{assumption}  \label{assump: more realistic case}
    Let $\un{p}_i=U_i'(M_i)$ and $\ov{p}_i=U_i'(m_i)$.
    Then the set $\mathcal{P}= \bigcup_{i\in \mathcal{N}} [ \un{p}_i,\ov{p}_i]$ is connected, i.e., an interval on $\R_+$. 
     We write $\mathcal{P}$ in terms of its end points as $\mathcal{P}= [\un{p},\ov{p}]$. 
 \end{assumption}
  The linear convergence rate can now formally be stated as follows.
   \begin{prop} \label{prop:GeneralLinearConvergence}
Suppose Assumptions~\ref{assump: good utilities} to \ref{assump: more realistic case} hold; then for any $p(0)\in [\un{p},\ov{p}]$ and $\gamma \in ]0, \mu/N[$, Algorithm~\ref{Alg:staticDD} converges at rate $c^t$ to the set of optimal solutions $\mathcal{P}^{\star}$; and 
       \begin{align}
            \dist(p(t),\mathcal{P}^{\star})  \leq 
             &   c^t \dist(p(0),\mathcal{P}^{\star}),  \label{prop:eq_linear convergece rate1}
       \end{align}
     where $c=1-\gamma/L$. The optimal convergence rate $c=1-\mu/(N L)$ is obtained when $\gamma=\mu/N$.
   \end{prop}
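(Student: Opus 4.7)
The plan is to leverage Assumption 4 together with the new $L$-smoothness Assumption 3 to upgrade the one-sided curvature control on $D$ already given in Proposition 1 ($N/\mu$-smoothness) to a two-sided bound on $[\underline{p},\overline{p}]$, and then run a purely one-dimensional gradient descent contraction argument. The core obstacle is that gradient descent only enjoys this improved contraction when the iterates stay in a region where $D$ is strongly convex, so I must simultaneously control both the contraction and the invariance $p(t)\in[\underline{p},\overline{p}]$.

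First I would establish that $D$ is $(1/L)$-strongly convex on $[\underline{p},\overline{p}]$. For $p\in(\underline{p}_i,\overline{p}_i)$ the demand $q_i(p)=(U_i')^{-1}(p)$ is interior, and implicit differentiation of $U_i'(q_i(p))=p$ gives $q_i'(p)=1/U_i''(q_i(p))$. Assumptions 1 and 3 yield $-L\le U_i''\le -\mu$, so $-q_i'(p)\in[1/L,1/\mu]$; outside $[\underline{p}_i,\overline{p}_i]$ the demand is clipped and $q_i'(p)=0$. Assumption 4 says $\bigcup_i[\underline{p}_i,\overline{p}_i]=[\underline{p},\overline{p}]$ is connected, so at every $p$ in this interval at least one user is in its interior regime, which gives $D''(p)=-\sum_i q_i'(p)\ge 1/L$. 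This already forces $\mathcal{P}^\star=\{p^\star\}$ once we know $p^\star$ lies in $[\underline{p},\overline{p}]$, which follows from a monotonicity argument: for $p>\overline{p}$ every user sits at $m_i$ so $D'(p)\ge 0$, and for $p<\underline{p}$ every user sits at $M_i$ so $D'(p)\le 0$.

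Second, I would prove by induction that $p(t)\in[\underline{p},\overline{p}]$ for all $t$. Suppose $p(t)\ge p^\star$; the $N/\mu$-smoothness of $D$ and $D'(p^\star)=0$ give $D'(p(t))\le (N/\mu)(p(t)-p^\star)$, and $\gamma\le \mu/N$ then yields $\gamma D'(p(t))\le p(t)-p^\star$, hence $p(t)-\gamma D'(p(t))\in[p^\star,p(t)]\subseteq[\underline{p},\overline{p}]$. The case $p(t)\le p^\star$ is symmetric. Since each $U_i$ is increasing we have $\underline{p}\ge 0$, so the nonnegativity projection $\lceil\cdot\rceil^+$ is inactive and $p(t+1)$ agrees with the unprojected step.

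Third, the contraction falls out in one line. By the 1D mean value theorem applied to $D'$, there exists $\xi_t$ strictly between $p(t)$ and $p^\star$ with $D'(p(t))=D''(\xi_t)(p(t)-p^\star)$, and by step two $\xi_t\in[\underline{p},\overline{p}]$, so $D''(\xi_t)\in[1/L,\,N/\mu]$. Substituting into the update gives
\begin{equation}
|p(t+1)-p^\star| = |1-\gamma D''(\xi_t)|\,|p(t)-p^\star|,
\end{equation}
and for $\gamma\in(0,\mu/N]$ the factor $1-\gamma D''(\xi_t)$ lies in $[1-\gamma N/\mu,\,1-\gamma/L]\subseteq[0,\,1-\gamma/L]$. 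This yields $|p(t+1)-p^\star|\le (1-\gamma/L)|p(t)-p^\star|$, and iterating establishes~\eqref{prop:eq_linear convergece rate1} with $c=1-\gamma/L$; the optimal rate $1-\mu/(NL)$ is attained at the largest allowed step $\gamma=\mu/N$. The delicate part is the interplay in step two between the step-size cap and the smoothness bound, which is exactly what prevents overshoot past $p^\star$ and keeps us inside the region where the strong-convexity estimate from step one is valid.
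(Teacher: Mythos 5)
Your proposal is correct and follows the same overall architecture as the paper's proof: establish $(1/L)$-strong convexity of $D$ on $[\un{p},\ov{p}]$ via the per-user decomposition and the connectedness of $\bigcup_i[\un{p}_i,\ov{p}_i]$, locate the unique optimizer $p^\star$ in that interval by the sign of $D'$ at the endpoints, show the iterates never overshoot $p^\star$ (so they stay where strong convexity holds), and then extract a one-step contraction. The differences are in the technical tools. Where the paper proves strong convexity of each $D_i$ on $[\un{p}_i,\ov{p}_i]$ through a first-order co-coercivity inequality for the concave $L$-smooth $U_i$ (Lemmas 2 and 3 in the appendix) and derives the contraction from the resulting inequality $\frac{1}{L}(p-p^\star)\le D'(p)$, you compute $D''(p)=-\sum_i q_i'(p)$ by implicit differentiation and apply the mean value theorem to $D'$. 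Your route is shorter and gives a two-sided bound $D''(\xi_t)\in[1/L,N/\mu]$ that handles both signs of $p(t)-p^\star$ at once, and your explicit induction for the invariance $p(t)\in[\un{p},\ov{p}]$ is actually more careful than the paper's bare citation of its Proposition 2. The one thing to tighten: your argument implicitly assumes each $U_i$ is twice differentiable (you use $U_i''$ and $D''$), which is not among Assumptions 1--4; the paper's first-order formulation avoids this. The fix is routine --- replace the pointwise bound $D''\ge 1/L$ by the increment inequality $\langle D_i'(y_2)-D_i'(y_1),\,y_2-y_1\rangle\ge\frac{1}{L}|y_2-y_1|^2$ on $[\un{p}_i,\ov{p}_i]$ (which follows from concavity and $L$-smoothness of $U_i$ without second derivatives), note that these increments telescope across the overlapping subintervals to give the same inequality for $D'$ on all of $[\un{p},\ov{p}]$, and use that in place of the mean value theorem in your final display. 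Also note a minor point at junction prices $p=\ov{p}_i=\un{p}_j$, where no user need be strictly interior; the increment formulation sidesteps this as well.
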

  \begin{proof}     Let us start by showing that $D(\cdot)$ is $1/L$-convex on $[\un{p},\ov{p}]$.
     From Lemma~\ref{prop:dual-problem-is-1L-convex}, we can write $D(p)=\sum_{i=1}^N D_i(p)$ where $D_i(\cdot)$ are convex on $\R_+$ for all $i\in \mathcal{N}$ and $D_i(\cdot)$ is $1/L$-convex on $[\un{p}_i,\ov{p}_i]$.
      Therefore, from Assumption~\ref{assump: more realistic case} and the fact that sums of a convex and $1/L$-convex function are $1/L$-convex~\cite[Lemma 2.1.4]{Book_Nesterov_2004} we get that $D(p)$ is $1/L$ convex on $[\un{p},\ov{p}]$.

     Since $D(\cdot)$ is strongly convex on $\mathcal{P}$, $\mathcal{P}^{\star}\cap [\un{p},\ov{p}]$ can have at most one element.
     In fact, $\mathcal{P}^{\star}\cap [\un{p},\ov{p}]$ is non-empty, since from~\eqref{eq:dual_gradient} and Assumption~\ref{assump: not a stupid problem} we have
    \begin{align} 
       D'(\un{p})&=Q-\sum_{i=1}^{N}M_i \leq 0 ~~~ \text{ and  }  \\ D'(\ov{p})&=Q-\sum_{i=1}^{N}m_i \geq 0,
    \end{align}
    and therefore by the continuity of $D'(\cdot)$ and the intermediate value theorem there exists  $p^{\star}\in[\un{p},\ov{p}]$ such that $D'(p^{\star})=0$.
      We now show the $c^t$ convergence rate of Algorithm~\ref{Alg:staticDD} to $p^{\star}$. 
      Without loss of generality suppose that  $p(0)\geq p^{\star}$.
      Then from Proposition~\ref{prop:feasible_prop} and the step-size choice of $\gamma\in ]0,\mu/N]$ we know $p(t)\geq  p^{\star}$ for all $t\in \N$.
   Hence, suppose $p(0)\geq p^{\star}$ then we get that
   \begin{align}
      p(t{+}1)-p^{\star} &=   p(t)-p^{\star} - \gamma D'(p(t)),  \nonumber \\
                                 &\leq p(t)-p^{\star}-\frac{\gamma}{L}(p(t)-p^{\star}) \\
                                 & =  \left(1-\frac{\gamma}{L}\right) (p(t)-p^{\star}) \label{prop:lincon_c}
   \end{align}
   where the inequality follows from the fact that $p(t{+}1),p(t),p^{\star}\in [\un{p},\ov{p}]$ and that $D$ is $1/L$-strongly convex on $[\un{p},\ov{p}]$, which implies that~\cite[Theorem 2.1.9.]{Book_Nesterov_2004} 
  \begin{align}
        \frac{1}{L}(p-p^{\star}) \leq D'(p), \text{ for all } p \in   [\un{p},\ov{p}].
  \end{align}
   Applying Inequality~\eqref{prop:lincon_c} $t$ times gives~\eqref{prop:eq_linear convergece rate1}, which concludes the proof.
  \end{proof}
   Note that the optimal convergence rate $c=1- \mu/(N L)$ in \eqref{prop:eq_linear convergece rate1} depends on the number of users $N$.
   However, the following mild assumption yields a convergence rate in the dual variable that is independent of the number of users, which illustrate excellent scaling properties Algorithm \ref{Alg:staticDD}.

   \begin{assumption}  \label{assump: ideal  case}
    Let $\un{p}=U_i'(M_i)$ and $\ov{p}=U_i'(m_i)$ for all $i\in \mathcal{N}$.
 \end{assumption}

   Assumption~\ref{assump: ideal  case} essentially gives each user the freedom to choose their utility functions $U_i$ as well as upper and lower bounds $m_i$ and $M_i$ with the constraint that $U_i'(M_i)=\un{p}$ and $U_i'(m_i)=\ov{p}$. These utility functions can be customized to reflect the preferences or priorities of the user, for example the times of day that each user will require the most power.

   \begin{prop}\label{prop:convergence rate}
       Suppose Assumptions~\ref{assump: good utilities}--\ref{assump:smooth} and~\ref{assump: ideal  case} hold.
       Then for any $p(0)\in [\un{p},\ov{p}]$ and $\gamma \in]0, \mu/N]$, Algorithm~\ref{Alg:staticDD} has a $c^t$ convergence rate which does not depend on the number of users.
     In particular, we have
       \begin{align}
            \dist(p(t),\mathcal{P}^{\star})  \leq 
             &   c^t \dist(p(0),\mathcal{P}^{\star}),  \label{prop:eq_linear convergece rate}
       \end{align}
     where $c=1-N\gamma/L$. The optimal convergence rate $c=1- \mu/L$ is obtained when $\gamma=\mu/N$.
   \end{prop}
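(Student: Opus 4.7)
The plan is to proceed exactly in parallel with the proof of Proposition~\ref{prop:GeneralLinearConvergence}, but to exploit the fact that Assumption~\ref{assump: ideal  case} aligns the per-user intervals $[\un{p}_i,\ov{p}_i]$ into a single common interval $[\un{p},\ov{p}]$. The key payoff is that the strong-convexity modulus of $D(\cdot)$ scales with $N$ rather than being a fixed constant, and this is what makes the resulting linear rate independent of the number of users.

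First, I would invoke the decomposition $D(p) = \sum_{i=1}^N D_i(p)$ used in the proof of Proposition~\ref{prop:GeneralLinearConvergence}, where each $D_i(\cdot)$ is $1/L$-strongly convex on $[\un{p}_i, \ov{p}_i]$. Under Assumption~\ref{assump: ideal  case}, every one of these intervals equals $[\un{p}, \ov{p}]$, so all $N$ summands are simultaneously $1/L$-strongly convex on the same interval. By additivity of the strong-convexity modulus (\cite[Lemma 2.1.4]{Book_Nesterov_2004}), $D(\cdot)$ is therefore $N/L$-strongly convex on $[\un{p}, \ov{p}]$. Existence of a dual optimum $p^{\star} \in [\un{p}, \ov{p}]$ follows by the same intermediate-value argument as before: $D'(\un{p}) \leq 0$ and $D'(\ov{p}) \geq 0$ by Assumption~\ref{assump: not a stupid problem} and continuity of $D'$.

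Next, I would ensure that the iterates stay in $[\un{p}, \ov{p}]$ so that the strong-convexity inequality is available at every step. Without loss of generality assume $p(0) \geq p^{\star}$; then Proposition~\ref{prop:feasible_prop}(a), applied under the hypothesis $\gamma \in ]0, \mu/N]$, gives $p(t) \geq p^{\star}$ for all $t$, and the fact that the sequence is monotone non-increasing (shown in the induction step of that proof) keeps it bounded above by $p(0) \leq \ov{p}$. The symmetric case $p(0) \leq p^{\star}$ is analogous.

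With these two facts in hand, the contraction argument is a near-copy of~\eqref{prop:lincon_c}: using $N/L$-strong convexity of $D$ on the interval containing the iterates, one gets $D'(p(t)) \geq (N/L)(p(t) - p^{\star})$, and therefore
\begin{align}
p(t{+}1) - p^{\star} &= p(t) - p^{\star} - \gamma D'(p(t)) \\
&\leq \left(1 - \frac{N\gamma}{L}\right)(p(t) - p^{\star}).
\end{align}
Iterating $t$ times yields~\eqref{prop:eq_linear convergece rate}, and substituting $\gamma = \mu/N$ gives the advertised $c = 1 - \mu/L$, which has no $N$ dependence. The only subtle point, and the one I would be most careful about, is verifying that the valid step-size range $]0, \mu/N]$ is compatible with the contraction constant $1 - N\gamma/L$ remaining in $[0, 1)$; this requires the mild sanity check $\mu \leq L$, which is automatic since a $\mu$-strongly convex, $L$-smooth $U_i$ must have $\mu \leq L$.
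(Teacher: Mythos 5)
Your proposal is correct and follows essentially the same route as the paper: the paper's own proof likewise reduces to the argument of Proposition~\ref{prop:GeneralLinearConvergence}, noting only that under Assumption~\ref{assump: ideal  case} each $D_i$ is $1/L$-strongly convex on the common interval $[\un{p},\ov{p}]$, so $D$ is $N/L$-strongly convex there and the contraction factor becomes $1-N\gamma/L$. Your added sanity check that $\mu\leq L$ keeps $c\in[0,1)$ is a worthwhile detail the paper leaves implicit.
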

    \begin{proof}
   The proof follows almost the  same steps as the proof of Proposition~\ref{prop:GeneralLinearConvergence}.
  The only difference is that here  the dual function $D$ is $N/L$-convex on $[\un{p},\ov{p}]$ instead of $1/L$-convex as in Proposition~\ref{prop:GeneralLinearConvergence}.
   To see why $D$ is $N/L$-convex on $[\un{p},\ov{p}]$, we have from Lemma~\ref{prop:dual-problem-is-1L-convex} (in the Appendix)
   that $D(p)=\sum_{i=1}^N D_i(p)$ where $D_i(p)$ is $1/L$-convex on $[\un{p},\ov{p}]$ for each $i\in \mathcal{N}$.
    \end{proof}
    
    In the next section we will consider a specific form of utility function that is commonly used in the resource allocation literature \cite{yim2007reverse}. In particular, we assume a utility function of the form
   \begin{align}\label{eq:NumResCase2}
       U_i(q_i)=a_i \log(b_i+q_i),
    \end{align}
where the parameters $a_i,b_i$ may be unique to the different users $i\in\mathcal{N}$. Applying Proposition 4 to this logarithmic utility function, we can assert specific conditions under which the set $\mathcal{P}= \bigcup_{i\in \mathcal{N}} [ \un{p}_i,\ov{p}_i]$ is connected. 
   \begin{prop}\label{prop:inNumCase2}
       Let $a_i$ in Equation~\eqref{eq:NumResCase2} be ordered as $a_1\leq \cdots \leq a_N$. Let $M_i=M$ and $m_i=0$ for all $i$.
       Then if  $(b_i+M)a_i\geq a_{i+1}$ for $i=1,\cdots, N-1$ and $p(0)\in [a_1/(b_i+M),a_N]$ then if we choose the step-size $\gamma=\mu/N$ we have
       \begin{align}
          \dist(\vec{p}(t),\mathcal{P}^{\star}) \leq \left( 1-\frac{\mu}{N L}\right)^t  \dist(\vec{p}(0),\mathcal{P}^{\star})
       \end{align}  

   \end{prop}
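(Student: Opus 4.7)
The statement is a direct specialization of Proposition~\ref{prop:GeneralLinearConvergence} to the logarithmic utility family~\eqref{eq:NumResCase2}. The plan is therefore to verify Assumptions~\ref{assump: good utilities}, \ref{assump: not a stupid problem}, \ref{assump:smooth}, and \ref{assump: more realistic case} for this family with explicit values of $\mu$ and $L$, and then invoke Proposition~\ref{prop:GeneralLinearConvergence} at the step-size $\gamma=\mu/N$, which produces exactly the claimed rate $c=1-\mu/(NL)$. Nothing beyond the machinery of Section~\ref{sec:CA_LCR} is needed.

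I would first differentiate $U_i(q_i)=a_i\log(b_i+q_i)$ twice to obtain $U_i'(q_i)=a_i/(b_i+q_i)$ and $U_i''(q_i)=-a_i/(b_i+q_i)^2$. On $[m_i,M_i]=[0,M]$ the magnitude $|U_i''|$ lies in $[a_i/(b_i+M)^2,\ a_i/b_i^2]$, supplying $\mu=\min_i a_i/(b_i+M)^2$ for Assumption~\ref{assump: good utilities} and $L=\max_i a_i/b_i^2$ for Assumption~\ref{assump:smooth}. Assumption~\ref{assump: not a stupid problem} follows from $\sum_i m_i = 0 \le Q \le NM = \sum_i M_i$, which is implicit in the setup of the proposition.

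For Assumption~\ref{assump: more realistic case} I read off $\un{p}_i = U_i'(M) = a_i/(b_i+M)$ and $\ov{p}_i = U_i'(0) = a_i/b_i$. Since $a_1\le\cdots\le a_N$, the intervals $[\un{p}_i,\ov{p}_i]$ are ordered, and connectedness of their union reduces to the consecutive overlap condition $\un{p}_{i+1}\le \ov{p}_i$ for every $i$. The hypothesis $(b_i+M)a_i\ge a_{i+1}$ rearranges precisely to $\ov{p}_i \ge a_{i+1}/(b_i+M) \ge \un{p}_{i+1}$, establishing the overlap, and a short induction then yields $\mathcal{P}=[\un{p},\ov{p}]$ with $\un{p}=a_1/(b_1+M)$ and $\ov{p}=a_N/b_N$. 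The assumed range of $p(0)$ sits inside this interval, so Proposition~\ref{prop:GeneralLinearConvergence} applies at $\gamma=\mu/N$ and delivers the stated bound.

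The only genuinely non-routine piece is matching the algebraic hypothesis $(b_i+M)a_i\ge a_{i+1}$ to the overlap inequality $\un{p}_{i+1}\le\ov{p}_i$ and correctly identifying the endpoints $\un{p},\ov{p}$; the rest is direct differentiation and a black-box appeal to Proposition~\ref{prop:GeneralLinearConvergence}. I expect the main obstacle, if any, to be book-keeping around the exact form of the overlap condition (in particular, the role of $b_{i+1}$ versus $b_i$ in the hypothesis), rather than any analytic difficulty.
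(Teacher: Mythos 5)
Your proposal follows exactly the paper's route: the result is obtained as a corollary of Proposition~\ref{prop:GeneralLinearConvergence} by checking that the hypothesis $(b_i+M)a_i\ge a_{i+1}$ forces consecutive overlap of the intervals $[\un{p}_i,\ov{p}_i]$ with $\un{p}_i=a_i/(b_i+M)$, so that Assumption~\ref{assump: more realistic case} holds, after which the rate $c=1-\mu/(NL)$ at $\gamma=\mu/N$ is read off directly. If anything you are more careful than the paper, which writes $\ov{p}_i=U_i'(0)=a_i$ (implicitly taking $b_i=1$) where you correctly obtain $a_i/b_i$, and which does not bother to verify Assumptions~\ref{assump: good utilities}--\ref{assump:smooth}; the residual bookkeeping issue you flag about the role of $b_i$ versus $b_{i+1}$ in the overlap condition is present in the paper's own one-line argument as well.
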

   \begin{proof} 
      The result is a corollary of Proposition~\ref{prop:GeneralLinearConvergence} and can be obtained by showing that Assumption~\ref{assump: more realistic case} holds.

  To see that Assumption~\ref{assump: more realistic case} holds, note that  $\un{p}_i=U_i'(M)= a_i/(b_i+M)$ and  $\ov{p}_i=U_i'(m)=a_i$.
        Therefore, $(b_i+M)a_i\geq a_{i+1}$ for $i=1,\cdots, N-1$ implies that $\mathcal{P}= \bigcup_{i\in \mathcal{N}} [ \un{p}_i,\ov{p}_i]$ is connected.
   \end{proof}
   \begin{figure}
   \includegraphics[width=\linewidth]{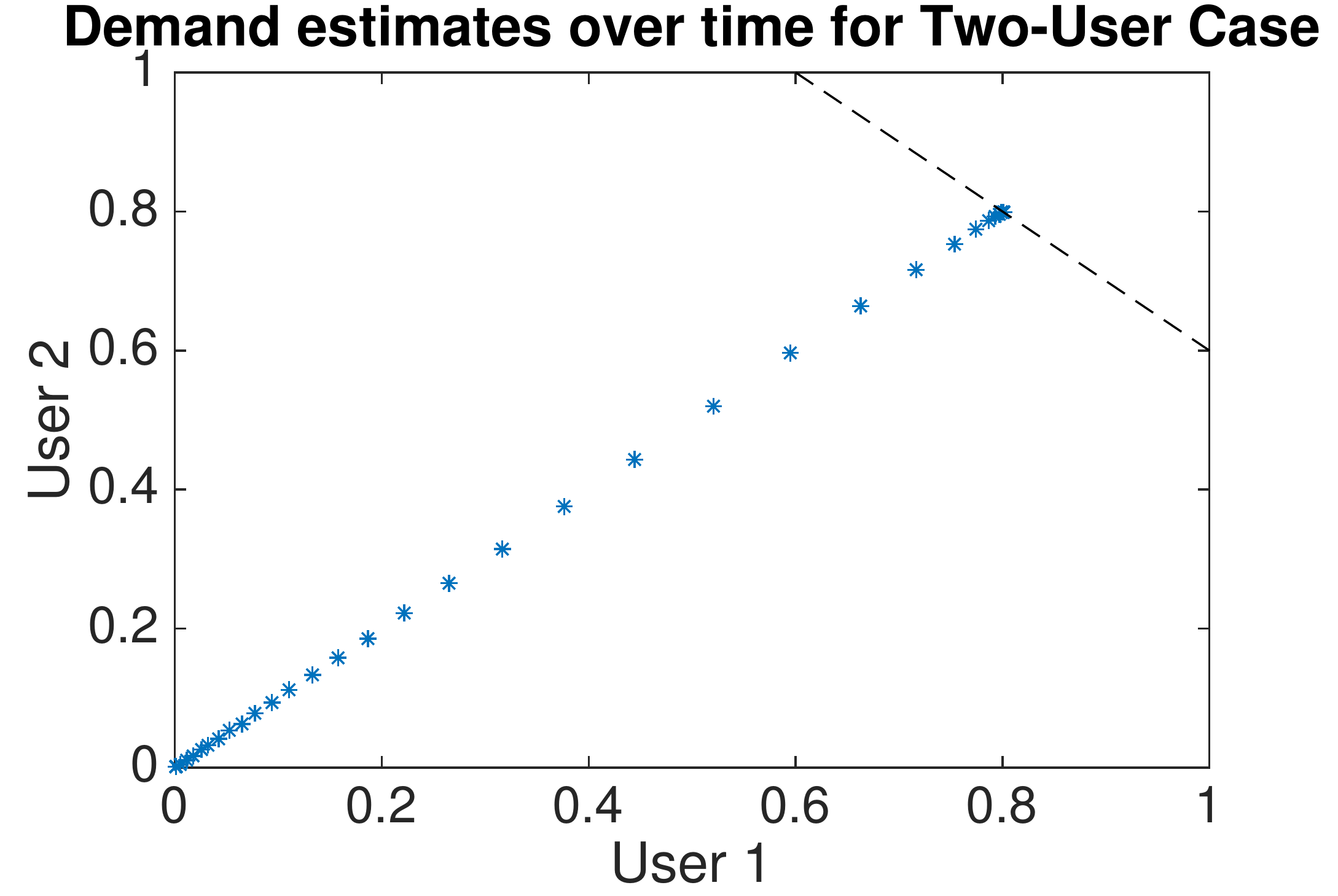}
 \caption{Using a sufficiently small step size, the feasibility of the primal problem is maintained. The upper boundary of this feasible set is denoted by the dotted line.}\label{fig:rate per user 2-case}
   \end{figure}

\begin{figure*}
    \centering
    \begin{subfigure}[b]{0.32\textwidth}
        \includegraphics[width=\textwidth]{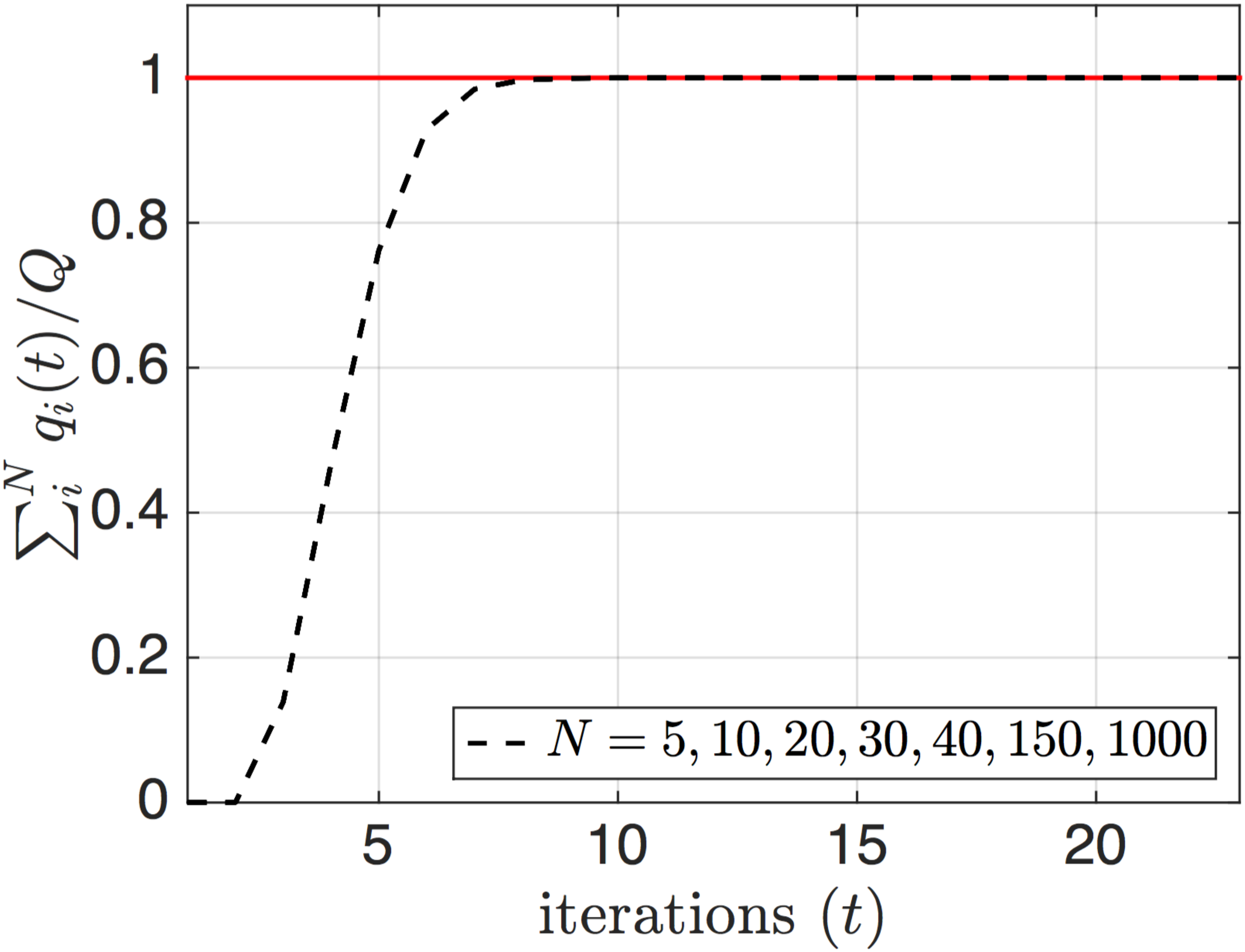}
        \caption{Primal Feasibility}
       \label{fig:primal feasibility-high price}
    \end{subfigure}
    ~ 
    \begin{subfigure}[b]{0.32\textwidth}
        \includegraphics[width=\textwidth]{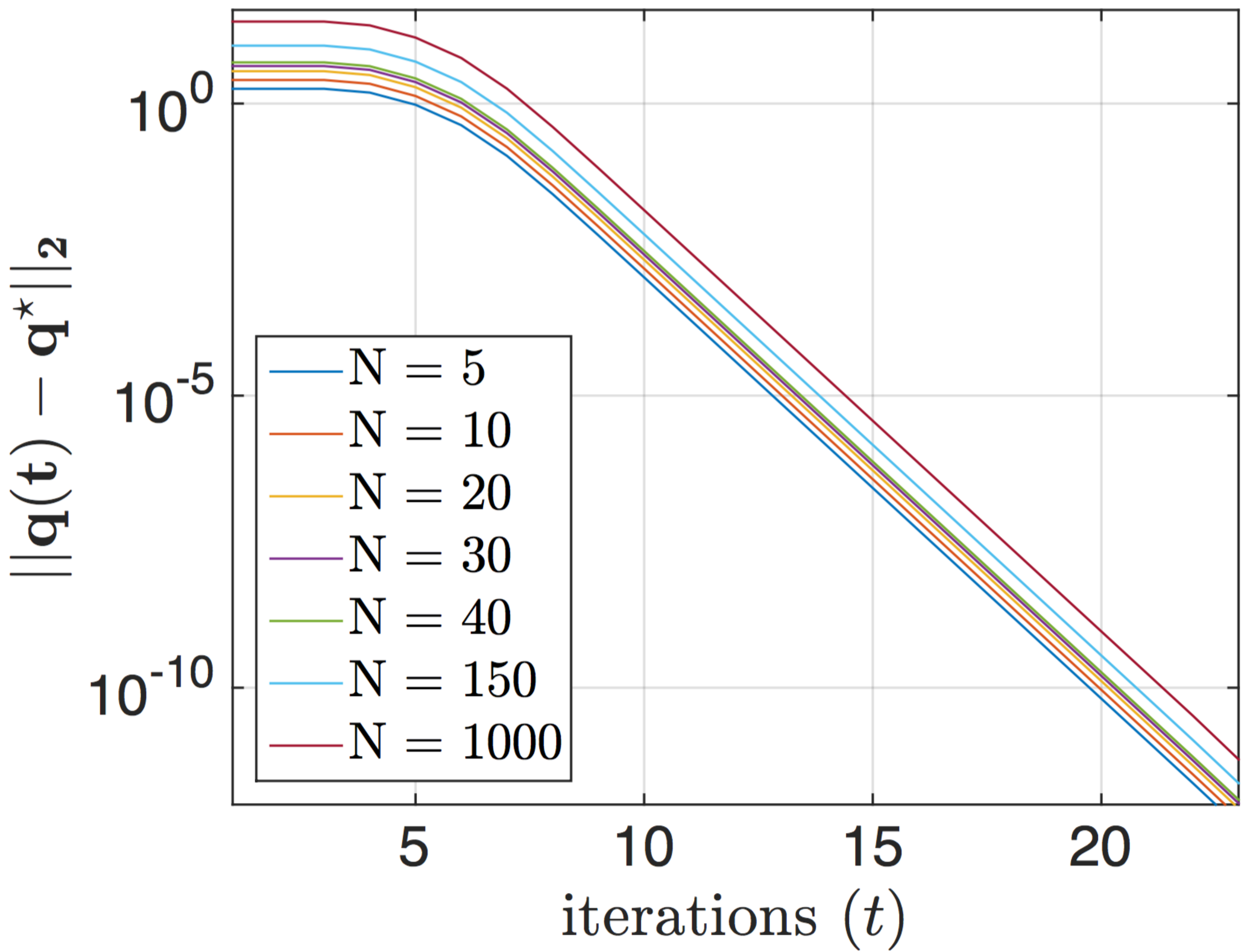}
        \caption{Convergence of the Primal Variable}
        \label{fig:primal-error}
    \end{subfigure}
    ~ 
    \begin{subfigure}[b]{0.32\textwidth}
        \includegraphics[width=\textwidth]{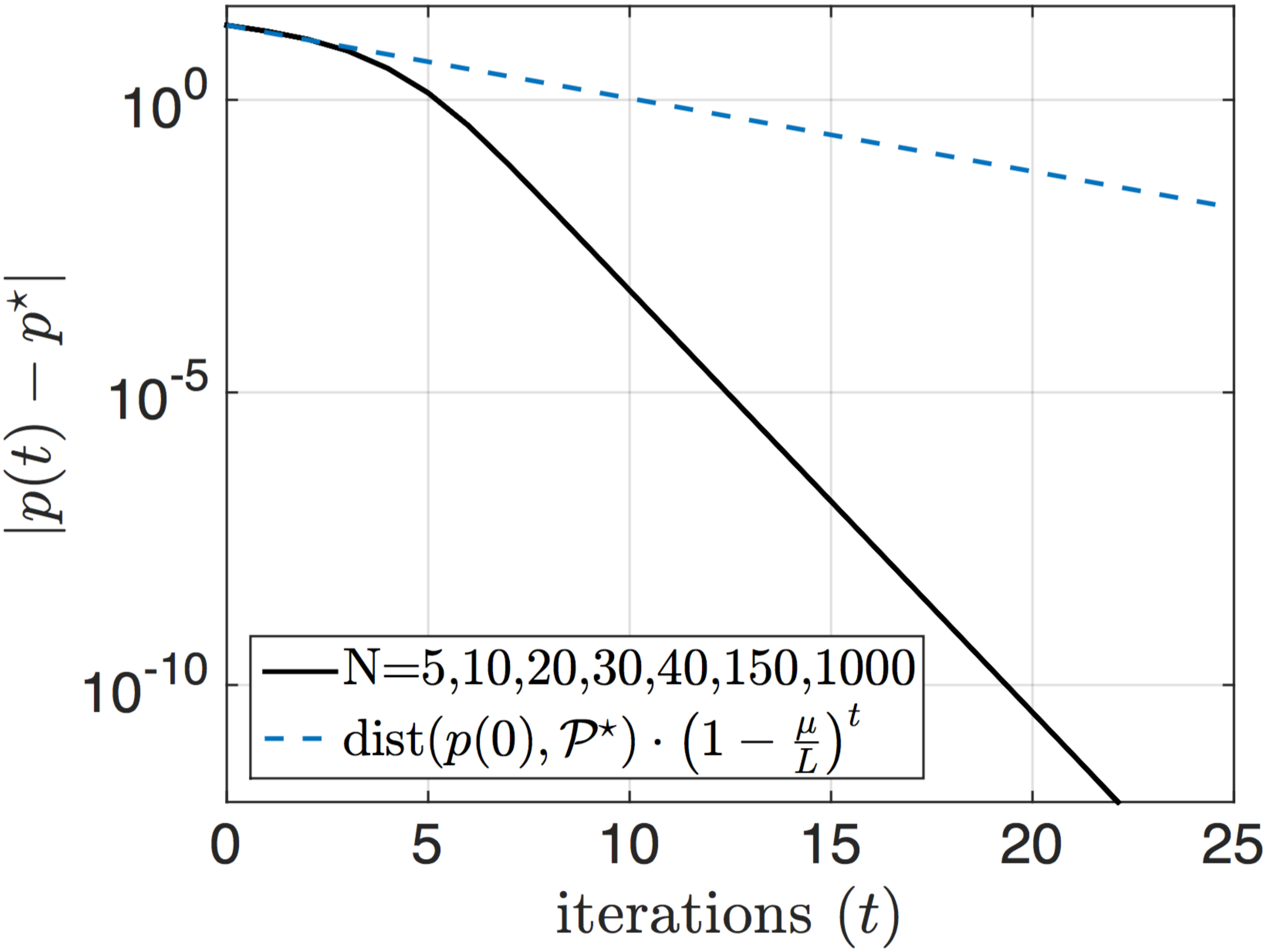}
        \caption{Convergence of the dual Variable}
       \label{fig:dual-error}
    \end{subfigure}
    \caption{Convergence of the algorithm for different number of users $N=5,10,20,30,40,150,1000$. In Figures~\ref{fig:primal feasibility-high price} and \ref{fig:dual-error} the behaviour of the algorithm is identical for all $N$. The blue dotted line in~\ref{fig:dual-error} is the theoretical bound from Proposition~\ref{prop:convergence rate}.  }\label{fig:multiuser case}
\end{figure*}

 \section{Numerical Results}\label{sec:Numerical}

 In this section we illustrate the theoretical findings using two numerical examples.  
 In both examples, we set $m_i=0$ and $M_i$ and use the same utility function of the form in Equation~\eqref{eq:NumResCase2} with $a_i=20$, $b_i=1$.


 Figure~\ref{fig:rate per user 2-case} depicts the first example which is a system consisting of two users. 
 The power suppliers capacity is $Q=1.6$ and the  step size is $\gamma=\mu/N$. 
 As observed, both users are assigned a socially optimal rate of $q_i = 0.8$ power units each and
  primal feasibility  is satisfied at each iteration of the algorithm, i.e., $\sum_{i=1}^N q_i(t) \leq Q $ for all~$t$. 


 Figure~\ref{fig:multiuser case} illustrate a scenario with multiple users $N=5,10,20,30,40,150,1000$.
 The step-size is $\gamma=\mu/N$, the initial price is $p(0)=30$ in all cases, and the suppliers capacity is $Q=4N/5$. 
 The assumptions of Proposition~\ref{prop:convergence rate} hold in this case.
  Figure \ref{fig:primal feasibility-high price} depicts the  aggregate rates $\sum_{i=1}^N q_i$ allocated to all users  scaled by the power capacity $Q$. 
 As observed, the primal variables remain feasible at each iteration of the algorithm, as proved in Proposition~\ref{prop:feasible_prop}.
Figures \ref{fig:primal-error} and \ref{fig:dual-error} respectively illustrate the convergence of the algorithm to the optimal primal and dual variables.
 As observed,  the dual variables converge at a linear rate,
 which is to be expected due to Proposition~\ref{prop:convergence rate}.
 Moreover, the convergence is not sensitive to the number of users $N$ which also agrees with the results of  Proposition~\ref{prop:convergence rate}.
  Interestingly, the primal variables also converge at a linear rate which suggest that similar convergence results can be obtained for the primal variables. 
 
 As can be observed, the results in figures~\ref{fig:primal feasibility-high price} and~\ref{fig:dual-error} are identical for  $N=5,10,20,30,40,150,1000$. This can be explained by that the utilities of of the users are identical and the power supply increases with number of users.  However, Proposition~\ref{prop:convergence rate} also holds under much more general assumptions.  In future work we will consider more  complicated and realistic problems for which Proposition~\ref{prop:convergence rate} holds.

\section{Summary and Conclusion}
\label{sec:Conclude}
We considered the problem of allocating power to users in a network by one supplier and presented a distributed algorithm that uses a one-way communication between the supplier and users to coordinate the optimal solution. We showed how to choose step sizes and derive appealing convergence properties in the dual domain. Furthermore, we identified mild modifications to the power allocation problem that \textit{1)} results in a $c^t$ convergence rate of our algorithm, and \textit{2)} yields a convergence rate independent of the number of users in the network. All these indicate excellent scaling properties of the algorithm. The results presented here propose a paradigm for solving distributed resource
 allocation problems that achieve fast convergence rates, even with a one-way information passing mechanism. Based on this work, of interest to us, is to derive similar convergence rates in the primal domain. In addition, within the context of limited communication, we would like to study the tradeoffs between the need for primal feasibility, convergence rates and actual bits transmitted (bandwidth) using the one-way  coordination protocol.
 
\appendix
 The following Lemmas are used in our derivations.
  \begin{lemma} \label{prop:dual-problem-is-1L-convex}
     Suppose Assumption~\ref{assump: good utilities} holds and set $\un{p}_i=U_i'(M_i)$ and $\ov{p}_i=U_i'(m_i)$ for $i\in \mathcal{N}$.
     Then the dual function $D(\cdot)$ can be decomposed as $D(p)=\sum_{i=1}^N D_i(p)$, where $D_i(\cdot)$ is 
     $1/L$-convex on $[\un{p}_i,\ov{p}_i]$ and the derivative $D'(\cdot)$ is constant on $[0,\un{p}_i]$ and on $[\ov{p}_i,\infty]$. 
    In particular, $D_i'(p) =Q/N-M_i$ if $p\in [0,\un{p}_i]$ and  $D_i'(p) =Q/N-m_i$ if $p\in [\ov{p}_i,\infty]$.
  \end{lemma}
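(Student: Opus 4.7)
The plan is to decompose the dual function into a sum of per-user dual functions by distributing the capacity term, and then to analyze each $D_i$ separately in the three regimes determined by where the box constraints $[m_i,M_i]$ bind. Concretely, I would write
\[
   D(p) = \sum_{i=1}^N D_i(p), \qquad D_i(p) := U_i(q_i(p)) - p\bigl(q_i(p) - Q/N\bigr),
\]
with $q_i(p)=[(U_i')^{-1}(p)]_{m_i}^{M_i}$ from \eqref{eq:demand_from_price}. Since $U_i$ is strictly concave (Assumption~\ref{assump: good utilities}), $U_i'$ is strictly decreasing, hence so is $(U_i')^{-1}$; therefore $p\leq \un{p}_i = U_i'(M_i)$ forces $q_i(p)=M_i$, and $p\geq \ov{p}_i=U_i'(m_i)$ forces $q_i(p)=m_i$.

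Next I would compute $D_i'$ in each of the three regimes. On $[0,\un{p}_i]$, $q_i(p)\equiv M_i$ is constant, so $D_i(p) = U_i(M_i) - p(M_i - Q/N)$ and $D_i'(p) = Q/N - M_i$; on $[\ov{p}_i,\infty)$, the analogous calculation gives $D_i'(p) = Q/N - m_i$. On the middle interval $[\un{p}_i,\ov{p}_i]$ I would apply the envelope theorem (or differentiate directly and use $U_i'(q_i(p))=p$) to obtain $D_i'(p) = Q/N - q_i(p)$, which is continuous and matches the two constant values at the endpoints.

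The core step is establishing $1/L$-strong convexity of $D_i$ on $[\un{p}_i,\ov{p}_i]$. On this interval $q_i(p) = (U_i')^{-1}(p)$ is differentiable (since $U_i''<0$), and differentiating $D_i'(p) = Q/N - q_i(p)$ yields
\[
   D_i''(p) = -\frac{1}{U_i''(q_i(p))}.
\]
By $L$-smoothness of $U_i$ (Assumption~\ref{assump:smooth}, interpreted as $L$-smoothness in the sense used throughout the paper), $|U_i''|\leq L$, so $-U_i''(q_i(p))\leq L$, and hence $D_i''(p)\geq 1/L$, giving the claimed $1/L$-convexity. Summing over $i$ yields the required decomposition and properties of $D$.

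The main obstacle is the middle-regime calculation: one must justify differentiability of $q_i(\cdot)$ (via the inverse function theorem applied to $U_i'$, using $U_i''\leq -\mu<0$), and correctly convert the smoothness bound on $U_i$ into the lower bound $-U_i''\leq L$ that drives the $1/L$-convexity constant. Everything else is bookkeeping once the three-regime structure of $q_i(p)$ is in place.
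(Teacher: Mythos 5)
Your decomposition and three-regime analysis match the paper's: you use the same $D_i(p)=U_i(q_i(p))-p q_i(p)+pQ/N$ split (the paper's displayed formula actually has a sign typo here; your version is the consistent one) and the same observation that $q_i(p)$ saturates at $M_i$ for $p\leq \un{p}_i$ and at $m_i$ for $p\geq \ov{p}_i$, giving the constant-derivative claims. Where you genuinely diverge is the core step. The paper proves $1/L$-convexity of $D_i$ on $[\un{p}_i,\ov{p}_i]$ by a purely first-order argument: it invokes the characterization $\langle y_2-y_1, D_i'(y_2)-D_i'(y_1)\rangle \geq \frac{1}{L}|y_2-y_1|^2$ (Nesterov, Theorem 2.1.9) and derives this from the co-coercivity inequality for concave $L$-smooth functions (Lemma~\ref{lemma:concave_and_L-smooth_iif} in the appendix), after the change of variables $y_j=U_i'(x_j)$ justified by bijectivity of $U_i'$ on $[m_i,M_i]$. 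You instead compute $D_i''(p)=-1/U_i''(q_i(p))$ and bound it below by $1/L$ using $|U_i''|\leq L$. Your route is shorter and more transparent, but it buys this at the cost of an assumption the paper never makes: twice differentiability of $U_i$. Assumptions~\ref{assump: good utilities} and~\ref{assump:smooth} only give $\mu$-strong concavity and Lipschitz continuity of the gradient, under which $U_i''$ need not exist everywhere, so the inverse function theorem step and the bound $-U_i''\leq L$ are not literally available. If you add a standing smoothness hypothesis (or argue via the a.e.\ second derivative and integrate), your proof closes; otherwise the paper's first-order co-coercivity argument is the one that works at the stated level of generality.
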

  \begin{proof}
  a) We set 
     \begin{align}\label{eq:decomposed dual fun}
        D_i(p) = U_i(q_i(p))-pq_i(p)- p \frac{Q}{N},
     \end{align}
     where $q_i(p)$ is defined as in~\eqref{eq:demand_from_price}.
     By summing~\eqref{eq:decomposed dual fun} over $i\in \mathcal{N}$ and recalling the definition of the dual function $D$ in~\eqref{eq:dual2}, we get
     \begin{align}
         D(p) = \sum_{i=1}^N D_i(p).
     \end{align}
     Let us now show that $D_i(\cdot)$ is $\frac{1}{L}$-convex on $[\un{p}_i,\ov{p}_i]$.
     From~\cite[Theorem~2.1.9.]{Book_Nesterov_2004} 
it     is  $\frac{1}{L}$-convex on $[\un{p}_i,\ov{p}_i]$ if and only if for all $y_1,y_2 \in [\un{p}_i(t),\ov{p}_i(t)]$ it holds that
   \begin{multline}
            \frac{1}{L}   | y_2{-}y_1|^2 \leq    \left\langle y_2- y_1 ,D_i'(y_2)- D_i'(y_1) \right\rangle \\ 
  ~=\left\langle y_2{-} y_1 ,\left({-}(U_i')^{-1}(y_2)\right){-} \left({-}(U_i')^{-1}(y_1) \right) \right\rangle,  \label{eq:lemma2sdfa}
    \end{multline}
     where the equality comes by using that  (cf.~\eqref{eq:dual_gradient} and~\eqref{eq:demand_from_price})
   \begin{align}
         D_i'(p)=Q/N-q_i(p)= Q/N-[(U_i')^{-1}(p)]_{m_i}^{M_i}. \label{eq:decomposed dual gradient}  
   \end{align}
  Due to the strong concavity of $U_i(\cdot)$ on $[m_i,M_i]$ (cf.~Assumption~\ref{assump: good utilities}),  $U_i'$ is bijective from $[m_i,M_i]$ to $[\un{p}_i,\ov{p}_i]$ and we can choose  $x_1,x_2 \in [m_i,M_i]$ such that $U_i'(x_1)=y_1$ and $U_i'(x_2)=y_2$.      
    Then since $U_i$ is concave and $L$-smooth we have from~\eqref{eq:concave-lemma_3} in Lemma~\ref{lemma:concave_and_L-smooth_iif} that
   \begin{align}
               {-}  \left\langle U_i'(x_2){-} U_i'(x_1) ,x_2 {-}x_1 \right\rangle 
      { \geq   }\frac{1}{L}   | U_i'(x_2){-}U_i'(x_1)|^2,  \notag 
    \end{align}
    or by using $U'(x_1)=y_1$ and $U'(x_1)=y_1$, we get
   \begin{align}
     \hspace{-0.1cm}  \left\langle y_2{-} y_1 ,\left({-}(U_i')^{-1}(y_2)\right){-} \left({-}(U_i')^{-1}(y_1) \right) \right\rangle 
      { \geq  } \frac{1}{L}   | y_2{-}y_1|^2.  \notag 
    \end{align}
    Hence,~\eqref{eq:lemma2sdfa} holds and we can conclude that $D_i(\cdot)$ is $1/L$-convex on $[\un{p}_i,\ov{p}_i]$.    Let us now show that the gradient $D_i'(\cdot)$ is constant on $[0,\un{p}_i]$ and on $[\ov{p}_i,\infty]$.
   The result follows by combining~\eqref{eq:decomposed dual gradient} with $U_i'$ is decreasing and that $U'(M_i)=\un{p}_i$ and $U'(m_i)=\un{p}_i$.
  \end{proof}

 \begin{lemma} \label{lemma:concave_and_L-smooth_iif}
    A function $f$ is concave and $L$-smooth on $\mathcal{X}$ if for all $\vec{x}_1,\vec{x}_2 \in \mathcal{X}\subseteq \R^n$,
   \begin{eqnarray}
         \left\langle f'(\vec{x}_1){-} f'(\vec{x}_2) ,\vec{x}_2 {-}\vec{x}_1 \right\rangle 
      { \geq}   \frac{1}{L}   || f'(\vec{x}_2){-}f'(\vec{x}_1)||^2,   \label{eq:concave-lemma_3} 
    \end{eqnarray}    
    
 \end{lemma}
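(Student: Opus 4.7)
The statement asserts the \emph{easy} direction of a Baillon--Haddad-type equivalence: the cocoercivity-like inequality implies both $L$-smoothness and concavity of $f$. So my plan is to extract each of the two conclusions directly from the hypothesis, treating them as independent consequences.

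First I would establish $L$-smoothness. Starting from the given inequality
\begin{equation*}
\langle f'(\vec{x}_1)-f'(\vec{x}_2),\vec{x}_2-\vec{x}_1\rangle \geq \tfrac{1}{L}\|f'(\vec{x}_2)-f'(\vec{x}_1)\|^2,
\end{equation*}
I apply the Cauchy--Schwarz inequality to the left-hand side to obtain
\begin{equation*}
\|f'(\vec{x}_1)-f'(\vec{x}_2)\|\,\|\vec{x}_2-\vec{x}_1\| \;\geq\; \tfrac{1}{L}\|f'(\vec{x}_2)-f'(\vec{x}_1)\|^2.
\end{equation*}
If $f'(\vec{x}_1) = f'(\vec{x}_2)$ the Lipschitz bound is trivial; otherwise I divide both sides by $\|f'(\vec{x}_1)-f'(\vec{x}_2)\|$, rearrange, and conclude $\|f'(\vec{x}_1)-f'(\vec{x}_2)\|\leq L\|\vec{x}_1-\vec{x}_2\|$, which is exactly $L$-smoothness as defined in the notations section.

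Next I would establish concavity. The right-hand side of the hypothesis is manifestly nonnegative (a squared norm divided by $L>0$), so the left-hand side is nonnegative as well:
\begin{equation*}
\langle f'(\vec{x}_1)-f'(\vec{x}_2),\vec{x}_2-\vec{x}_1\rangle \geq 0, \quad \text{i.e.}\quad \langle f'(\vec{x}_1)-f'(\vec{x}_2),\vec{x}_1-\vec{x}_2\rangle \leq 0.
\end{equation*}
This is precisely the anti-monotonicity (monotonically decreasing in the multivariate sense) of the gradient of $f$ on $\mathcal{X}$, which for a differentiable function on a convex domain is equivalent to concavity. To make that step explicit I would invoke the standard first-order characterization: given $\vec{x}_1,\vec{x}_2\in\mathcal{X}$, write $g(\tau)=f(\vec{x}_1+\tau(\vec{x}_2-\vec{x}_1))$ and use that $g'(\tau)=\langle f'(\vec{x}_1+\tau(\vec{x}_2-\vec{x}_1)),\vec{x}_2-\vec{x}_1\rangle$ is nonincreasing in $\tau$ by the anti-monotonicity just shown, so $g$ is concave and hence $f$ is concave along the segment; since the segment was arbitrary, $f$ is concave on $\mathcal{X}$.

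The conceptual obstacle is essentially none --- both conclusions fall out of a single algebraic manipulation (Cauchy--Schwarz for smoothness; dropping a nonnegative term for concavity). The only mild subtlety worth flagging is the implicit assumption that $\mathcal{X}$ is convex (so that concavity is even meaningful) and that $f$ is differentiable on $\mathcal{X}$ (so that $f'$ is defined); these are presupposed by the statement's use of $f'$ and the $L$-smoothness definition. I would note these silently in passing rather than belabor them, since the lemma is only used in the proof of Lemma~\ref{prop:dual-problem-is-1L-convex} where $\mathcal{X}=[m_i,M_i]$ is a closed interval.
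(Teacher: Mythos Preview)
Your argument is correct for the direction stated: Cauchy--Schwarz yields the Lipschitz bound on the gradient, and dropping the nonnegative right-hand side yields anti-monotonicity of $f'$, hence concavity. This is a clean, self-contained route.

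The paper's own proof takes a different and much shorter path: it simply observes that $-f$ is convex and invokes Theorem~2.1.5 in Nesterov's \emph{Introductory Lectures on Convex Optimization}, which gives several equivalent characterizations of convex $L$-smooth functions, one of which is the cocoercivity inequality $\langle g'(\vec{x}_1)-g'(\vec{x}_2),\vec{x}_1-\vec{x}_2\rangle \ge \tfrac{1}{L}\|g'(\vec{x}_1)-g'(\vec{x}_2)\|^2$. Substituting $g=-f$ turns this into exactly the displayed inequality. Your approach has the advantage of being fully elementary and not depending on an external reference; the paper's has the advantage of immediately delivering the \emph{equivalence} (both directions), which matters because the application in Lemma~\ref{prop:dual-problem-is-1L-convex} actually uses the converse implication (concave $+$ $L$-smooth $\Rightarrow$ inequality~\eqref{eq:concave-lemma_3}), not the direction you proved. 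If you want your proof to serve that application as well, you would need to add the reverse direction, which is the genuinely nontrivial (Baillon--Haddad) half.
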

\begin{proof}
The proof follows directly from Theorem 2.1.5. in \cite{Book_Nesterov_2004} and using that $-f$ is convex. 
\end{proof}  

\bibliographystyle{IEEEtran}
\bibliography{IEEEabrv,refs}

\end{document}